\theoremstyle{plain}
\newtheorem{theorem}{Theorem}
\newtheorem{proposition}{Proposition}
\newtheorem{lemma}{Lemma}
\newtheorem*{lemma*}{Lemma}
\theoremstyle{definition}
\newtheorem{example}{Example}
\newtheorem*{ifpart}{Sufficiency}
\newtheorem*{onlyifpart}{Necessity}
\newcommand{\setword}[2]{%
  \phantomsection
  #1\def\@currentlabel{\unexpanded{#1}}\label{#2}%
}
\title{The probability of satisfying axioms: a non-binary perspective on economic design%, voting and social choice
}
\author{Pierre Bardier\thanks{pierre.bardier@ens.fr\newline I am deeply grateful to Marc Fleurbaey, Antonin Macé and William Thomson for their support, as well as for their detailed comments on this work. I thank the participants of the Workshop on Collective Decisions in Economic Analysis of the University of Alicante, the 9th International Workshop on Computational Social Choice in the University of Beersheba, the Conference on Economic Design in the University of Girona, the 16th meeting of the Society for Social Choice and Welfare in the Autonomous Technological Institute of Mexico, and the participants of the Online Social Choice and Welfare seminar, and of the economics seminar of the University of Caen.}}
\affil{Paris School of Economics and \'Ecole Normale Sup\'erieure de Paris,\\
48 Boulevard Jourdan 75014 Paris, France.}
\date{January 2025}
\begin{document}
\setlength{\droptitle}{-4em}
\onehalfspacing
\maketitle
\pagestyle{plain}

\vspace{-1.3cm}
\begin{abstract}

We provide a formal framework accounting for a widespread idea in the theory of economic design%, voting and social choice
: analytically established incompatibilities between given axioms should be qualified by the likelihood of their violation. We define the degree to which rules  satisfy an axiom, as well as several axioms, on the basis of a probability measure over the inputs of the rules.

Armed with this notion of degree, we propose and characterize: 

\begin{itemize}
    \item a criterion to evaluate and compare rules given a set of axioms, allowing the importance of each combination of axioms to differ, and
    \item a criterion to measure the compatibility between given axioms, building on a analogy with cooperative game theory.
\end{itemize}

\bigskip\noindent{\it JEL classification}: D47, D70, D71, D60

\noindent{\it Keywords}: Degree of satisfaction; probability of satisfaction;
ranking of rules; performance of rules; desirability of axioms;  compatibility of axioms; market design; voting; social choice.

\end{abstract}

\section{Introduction}\label{intro}

In the theory of economic design, %voting and social choice,
incompatibilities between axioms have given rise to a myriad of notions of the \textit{degree to which a given axiom is satisfied}. However, these are, generally, model-and-axiom-specific. In contrast, this paper explores the potential of defining such a notion as the \textit{probability with which an axiom, as well as a set of axioms, is satisfied}, without restricting the analysis to particular types of properties or problems.

In the face of incompatibilities, notions of degree allow to compare in a nuanced way rules that are not comparable when sticking to the binary constraint according to which either a rule satisfies the axioms under consideration, if it meets the requirements for all the elements in its domain of definition, or does not, ``at all'', satisfy them. Take the example, without getting into details here, of ``non-dictatorial'' and ``non-trivial'' voting rules, defined on the universal domain of preferences associated with finite sets of alternatives and voters. A consequence of the Gibbard-Satterthwaite theorem is that the relative merit of any of these rules cannot be assessed on the basis of the full-fledged axiom of  ``strategy-proofness'' (\cite{gibbard1973manipulation}, \cite{satterthwaite1975strategy}). However, it is still possible, in principle, and useful, to compare the sensitivity to manipulation\footnote{Or the lack of ``strategy-proofness''.} of two ``non-dictatorial'' and ``non-trivial'' rules. One can, for instance, build an index representing the potential gains faced by voters misrepresenting their preferences in the two rules. Alternatively, one can compare the sets of preference profiles for which these rules are manipulable, using the partial order of set-inclusion. As this example suggests, two prominent interpretations support the use of notions of degree: one in which the parameters selected to measure the departure from a desirable property represent the \textit{intensity} of the violation, and one in which they represent its \textit{plausibility}. Our approach bears on the latter as we propose to compare rules according to the probability that they satisfy an axiom, or a set of axioms.\footnote{Other mathematical objects than (probability) measures can capture the plausibility of the violation. As an illustration, topological notions may be involved, for example, informally, in statements concluding that the set of preference profiles for which a rule does not meet the requirements of an axiom is ``Baire-negligible''.}

Studies discussing the likelihood that a rule satisfy a certain axiom, be that through empirical or theoretical analysis, all require that a specific way of \textit{counting the instances} for which the rule meets the considered requirements be chosen. These instances can be composed of theoretical preference profiles, or stochastically generated ones, sets of alternatives, parameters of actual elections, \textit{etc}. In that respect, simulation models, focusing initially mostly on the occurrence of the ``Condorcet paradox'' in voting (see \cite{geh1983} and \cite{gehlep2017} for reviews of this literature), are now commonly used in diverse settings (\textit{e.g.}, in addition to voting,  market design, fair division), under increasingly general statistical assumptions (\cite{wil2019}, \cite{disKam2020},  \cite{szufa2020drawing}, \cite{boehmer2021putting},  \cite{boehmer2023properties}, \cite{bohmputting}). 

In line with these models, while also accounting for other approaches (see Section \ref{litt}), we consider an \textit{abstract set of instances} ---the inputs of a rule--- endowed with a probability structure reflecting their relative frequency. The typical example of such a set in our view is the set of preference profiles associated with either a fixed or a varying group of agents. We can then measure the mass of instances for which not only ``punctual'' axioms, but also ``relational'' axioms (\cite{thomson2023axiomatics}) are verified.\footnote{We propose a formal definition of these two types of axioms in Section \ref{rulesaxioms}. Informally, some axioms are requirements made on outcomes obtained for each instance separately, while others formulate restrictions on outcomes obtained from different instances related in a specific way.} This general framework applies in any field in which an axiomatic approach is relevant, and in particular, covers a wide spectrum of market design, voting and social choice problems, be they Arrovian aggregation problems, voting problems, matching problems, fair division selection ---or ranking--- problems, with divisible or indivisible resources. Importantly, it does so while providing the degree of satisfaction of either single axioms or sets of axioms.

Defining the degree of satisfaction as a probability guarantees, \textit{in contrast to defining it on the basis of a notion of intensity}, its commensurability across (sets of) axioms. Concretely, the possibility to compare the extent to which a given rule satisfies two different combinations of axioms proves fundamental to \textbf{(1)} evaluate and compare rules, and \textbf{(2)} measure the compatibility of these axioms. 

Let us first illustrate the simple objects around which this work is structured. The questions raised in \textbf{(1)} and \textbf{(2)} will be addressed on the basis of collections of probabilities presented in arrays of the following form:
\begin{example}\label{First ex} Consider three axioms, $a_1,a_2$ and $a_3$:
\begin{equation*}
\begin{matrix}
 a_1 & a_2 & a_3 & a_1a_2 & a_1a_3 & a_2a_3 & a_1a_2a_3\\
 1 & 0.8 & 0.4 & 0.8 & 0.4 & 0.35 & 0.35
\end{matrix},
\end{equation*} where the 6th column, say, reads as ``the considered rule satisfies axiom $a_2$ and axiom $a_3$ \textit{simultaneously} with probability 0.35'' ---precise definitions are given in Section \ref{degree}.
\end{example}

Given a set of instances and a set of axioms, there is an intuitive criterion, to which we alluded in the discussion of the consequences of the Gibbard-Satterthwaite theorem, and with which the one we propose is consistent. According to it, a rule performs better than another one if, for each combination of axioms, the subset of instances for which it violates the requirements is included in the set of instances for which the other rule violates the requirements. Certainly, interesting comparisons of rules can be derived for some types of problems using this criterion (\cite{pathak2013school}, \cite{arribillaga2016comparing}, \cite{abdulkadiroǧlu2020efficiency}, \cite{abdulkadirouglu2021priority}).\footnote{A related approach consists in looking for a rule such that the set of instances for which the rule satisfies the axioms is maximal for inclusion (\cite{dasgupta2008robustness}, \cite{barbera2017sequential}).} In general, nevertheless, it induces a very partial ranking of rules: by working with a notion of degree based on probabilities of satisfaction, one obtains an \textit{extension} of such an order, accounting for the fact that some violations are more likely than others.\footnote{In the school choice context, for example, working with such a notion enables to take into account the correlation between students' preferences.}  

\textbf{(1)} \textit{Evaluating and comparing rules}. We actually consider a \textit{completion}\footnote{That is, an extension to a complete order.} of the partial order we just described. Indeed, we introduce and characterise a criterion to quantify the performance of a rule with regard to two key components. The first component is, as expected, probabilities of satisfaction. The second component focuses on the specific normative content of axioms. More precisely, the normative desirability of axioms, and, crucially, that of their combinations, are defined through the use of a \textit{capacity}.\footnote{A capacity is a real-valued function defined on the power set associated with the axioms, which gives value $0$ to the empty set, and is monotonic with respect to inclusion.} Not only can an axiom be more valuable to the eye of a researcher or a policy maker than another one, but \textit{synergies} are likely to emerge in the combination of axioms: conditionally on the satisfaction of a given axiom, the satisfaction of another one may be more or less valued, so that these axioms may be ``complementary'' or ``substitutable''. Capacities enable to capture this type of dependence. 

Importantly, this formulation has an operational interpretation. A decision maker, for example a policy maker in charge of selecting a mechanism to match students with schools, must distinguish between two rules on the basis of several principles that are logically incompatible. She then asks a team of researchers to estimate, for each rule, how probable the satisfaction of each combination of principles is. After determining it, the research team inquires about the relative importance of each combination of principles for the policy maker. The task is then to integrate these two pieces of information in order to decide on the rule to adopt.

Loosely speaking, we identify the only measure of performance that $i)$ consistently extends the natural measure for the case of a single axiom, while $ii)$ disentangling the probability with which a rule satisfies a set of axioms and the probability with which it satisfies a superset of it ---see Theorem \ref{theorem1}. The necessity to do so comes from the monotonicity of capacities: the valuation of the satisfaction of a given combination of axioms is incorporated in that of a superset of it. We show that a measure failing point \textit{ii)} displays some redundancy and may thus wrongly lead to the conclusion that some rule performs better than another one.

\textbf{(2)} \textit{Measuring the compatibility of axioms}. Finally, a collection of probabilities can be analysed in order to determine the degree of compatibility, or, equivalently, the degree of incompatibility, of axioms, given a rule, or given a family of rules. When such a collection is associated with one specific rule, computing how likely this rule is to satisfy a certain axiom, \textit{given that it satisfies some others}, enables to better understand its behaviour. One can also analyse collections of probabilities to identify how (in)compatible axioms are, given a domain of admissible rules.

We introduce and characterise a criterion fulfilling this purpose based on an analogy with cooperative game theory. Admissible collections of probabilities are naturally associated with  a unique cooperative game and, \textit{on the obtained restricted set of games}, we identify the Shapley value as the most adequate measure ---see Theorem \ref{theorem2}.

In Section \ref{litt}, we situate our approach in relation to the literature. In Section \ref{degree}, we provide general definitions of rules, of ``punctual'' and ``relational'' axioms, and of the degree to which a rule satisfies a given set of axioms. We address question \textbf{(1)} in Section \ref{performance}, after characterising the set of admissible collections of probabilities. We address question $\textbf{(2)}$ in Section \ref{axiomscomp}. Finally, in Section \ref{discuss}, we discuss methods to proceed to a robust analysis with respect to the probabilities of satisfaction.

\section{Related Literature}\label{litt}

\cite{Tho2001}, in a paper in which he seeks to characterise the essential features of the \textit{axiomatic program}, conceives this research as the attempt to draw as precise a frontier as possible between axioms that are compatible and axioms that are not. It is then possible to distinguish, for a given set of axioms, families of problems for which they can all be satisfied, and families for which they cannot. This view has motivated the most standard way of dealing with impossibilities in the theory of economic design: when some axioms are shown to be incompatible on a given domain of parameters, it seems natural to look for restricted domains in which these axioms can actually be combined. Accordingly, the plausibility of the compatibility of these axioms becomes the plausibility of the restricted domains, and it is left to the consumer of the theory to assess how suitable the domain restrictions are in the context at hand. Recently, this type of approach has saliently been described in \cite{Mou2019}, reviewing new developments in the theory of fair allocation, centered around very structured problems such as ones with ``one-dimensional single-peaked preferences'', ``dichotomous preferences'', or ``preferences with perfect substitutability''. Restricted preference domains such as those described above have also received special interest in algorithmic social choice theory, in particular because their simpler structure is likely to decrease the complexity of algorithms (\cite{BCELP16}).  

Yet, this approach maintains the binary constraint according to which a given condition is satisfied on a whole domain of parameters or is not, ``at all'', satisfied, whereas constructing a \textit{less partial order} between rules would require to know, when one fails to yield the desired outcomes, by how much it fails. For that matter, the use of parametrically weakened versions is quite classical: one or several parameters indicate the intensity of departure from the original studied property, see \cite{moutho1988}, \cite{schummer2004almost}, \cite{branal2011}, \cite{chevaleyre2017distributed} and \cite{skowron2021proportionality} for instance.\footnote{Actually \cite{moutho1988} show how parametric relaxations can be used to demonstrate the salience of the incompatibility of given principles. The number of papers introducing parametric relaxations is extremely large and this list is by no means exhaustive. All the cited papers belong to a different branch of the economic design literature. Parametric relaxations of axioms are also studied in individual decision theory: \cite{chambers2025decision} propose, for example, a way to express the degree to which a preference relation violates the classical ``independence axiom'' ---while it satisfies the other axioms of expected utility--- and establish a relation between this degree and the distance between a utility function representing the violating preference and a utility function representing a preference satisfying all axioms of expected utility.}

However, parametrizations are model-and-axiom-specific, which makes them, most often, incomparable to each other. In other words, most often, the definition of parametrized versions of two axioms gives no clue on how to define the degree to which they are simultaneously satisfied. In contrast, in this paper, we exploit the commensurability that a probability notion offers when measuring the performance of rules, as well as when studying the compatibility of axioms. 

Another theoretical approach, closer to the way we proceed, was adopted in the context of Arrovian social choice theory in \cite{campbell1994trade, campbell2015social}. The method is to ``count'', using a (probability) measure, the pairs, or triples, of alternatives for which studied axioms are satisfied in order to identify \textit{trade-offs} between them. One can describe this method in the general terms of our paper: the set of instances endowed with a measure structure in these papers is the set of pairs, or triples, of alternatives ---and not, for example, the set of preference profiles. For us, the point of  considering an abstract set of instances is precisely to be able to account for various approaches \textit{i)} involving different sets on which a measure is defined, and \textit{ii)} deriving degrees of satisfaction through different methods, \textit{e.g.} mathematical analysis, as in \cite{campbell1994trade, campbell2015social}, simulations or econometric estimations.

We already discussed in the introduction how this work relates to simulation models. Their general principle is to derive, from (statistical) assumptions on the behaviour and the preferences of agents involved in a given aggregation problem, the probability of occurrence of certain types of outcomes under different rules. A recent review of this vast literature can be found in \cite{disKam2020}. Let us mention a few examples of studies in voting and market design primarily consisting in measuring the empirical frequency of the violation of a given property, different from ``Condorcet consistency''.\footnote{Once again, see \cite{geh1983} and \cite{gehlep2017} for a review of the literature specifically dedicated to the ``Condorcet paradox''. See also \cite{lepelley2000computer} and \cite{laslier2010silico}.} In the former literature, \cite{brandt2014identifying} study the number of solutions selected by standard tournament solution concepts, using both real world preference data and simulations, thus testing for their (lack of) decisiveness. Focusing on the occurrence of the ``agenda contraction paradox'', \cite{brandt2016analyzing} conclude, based on simulations used to extend theoretical results obtained for problems involving four alternatives, that sensitivity to such contraction is of higher practical relevance than the ``Condorcet loser paradox''. \cite{aleskerov2012manipulability} study the level of manipulability of multi-valued rules, using computer experiments on problems with four and five alternatives, after extending theoretical indices used for single-valued social choice procedures. In market design, \cite{roth1999redesign} conduct simulations on data from the ``National
Resident Matching Program''
to account for the manipulability of the matching mechanism, and observed that even if it is in principle manipulable, the number of agents who would have an interest in returning a false report vanishes as the size of the market grows. \cite{ghasvareh2020fairness} (Chapter 4) compare the frequency of ``priority violations'' in three well-known many-to-one matching mechanisms that satisfy ``strategy-proofness'' and ``efficiency'', for different statistical distributions on preference profiles. 
\medskip

Taking stock, we believe that the present work can help analyse and compare rules in a subtle way by providing measures of performance that incorporate the probability to simultaneously satisfy several axioms, as well as their normative desirability and that of their combination. In particular, it provides a way to enrich the use of models based on notions of degree interpreted in terms of frequency of satisfaction.\footnote{\cite{wil2019} highlighted the importance of this issue for computer experiments.} 

\section{A commensurable notion of degree of satisfaction}\label{degree}

The core of our analysis is conducted on the basis of collections of probabilities such as the one in Example \ref{First ex}. The set of all collections is characterised in Section \ref{domainproba} and in the Appendix (Section \ref{Set of collections}). In this section, we propose a definition of axioms, and of the degree of satisfaction, which cover the vast majority of axioms studied in the theory of economic design.

 \subsection{Rules and axioms}\label{rulesaxioms}

Any notion of the frequency with which a given rule satisfies axioms requires considering \textit{instances} over which measuring its behaviour. As suggested above, letting preference profiles vary and analysing the \textit{outcomes} prescribed by a rule, which is often done in practice, is an obvious way to study instances ---and distributions over these instances. That is why, for concreteness, we use the case of varying preference profiles defined on a \textit{fixed set of alternatives} in all the illustrations of subsequent definitions. 

However, in order to be as general as possible, we introduce an abstract notion of instance\footnote{This enables, for example, to cover the case of varying ---and unequally likely--- sets of alternatives (\textit{e.g.}, depending on the considered type of problem, varying sets of candidates, varying budget sets, varying sets of schools, \textit{etc}.).}, from which \textit{classes of problems}, \textit{rules} and \textit{axioms} are defined. Informally, the frequency of satisfaction of an axiom will be defined as the measure of the set of instances for which the considered rule meets the stated requirements.

The starting point for evaluating the performance of rules is to specify the relevant domain: we define a \textbf{class of problems} as a pair of sets $(I,O)$, and refer to elements of $I$ as instances, and to elements of $O$ as outcomes.\footnote{All the objects we consider depend on a specific pair $(I,O)$, but this dependence is most often omitted in the following.} These objects respectively represent the arguments and the images of a rule: a \textbf{rule} $f$ is a mapping:
$$f: I \to O.$$

We stress that $O$ can have various structures, it may be a set of singletons of a given set ---so that $f$ is a function--- or a set of subsets of arbitrary size of a given set ---so that $f$ is a correspondence--- or a set of binary relations defined over a given set ---so that $f$ is a ranking mapping--- among other possibilities.

As an illustration, in the classical microeconomic division problem, fixing a social endowment $\Omega \in \mathbb{R}^l_+$ of $l \in \mathbb{N}$ divisible resources,\footnote{Given a set $B$ and a natural number $K$, $B^K$ denotes the $K-$fold Cartesian product of $B$. In addition, $\mathbb{R}^K_+$ ($\mathbb{R}^K_{++})$ denote the set of vectors in $\mathbb{R}^K$ with only non-negative (positive) components.} an instance is a profile of continuous, monotonic and convex individual preferences over $\mathbb{R}^l_+$ associated to a group $N$ of $n \in \mathbb{N}$ agents. A rule is then a correspondence, mapping each such instance $i$ to a set $o$ of vectors in $\mathbb{R}^{nl}_+$.

Before we define axioms, an important distinction should be made between, in the words of \cite{thomson2023axiomatics}, \textit{punctual} and \textit{relational} axioms. The former are requirements imposed on outcomes obtained for each instance separately, while the latter formulate restrictions on outcomes obtained from different instances related in a specific way. In the microeconomic division framework just mentioned, ``efficiency'' is a punctual axiom, and so is ``no-envy'', while ``population monotonicity'' is a relational one.\footnote{For definitions of these conditions, see, \textit{e.g.}, \cite{Mou2019} (Section 3.3).}

A \textbf{punctual axiom} $a$ is a mapping:
\begin{align*}
a: I &\to 2^O\\
i &\mapsto O_i^a.   
\end{align*}

In words, $a$ specifies for each instance a set of admissible outcomes, and the image of an instance under rule $f$ satisfies the requirements of $a$ if and only if it belongs to this set of admissible outcomes. Then, the typical exercise in economic design %, voting and social choice
consists in finding a rule $f$ ---and, ideally, all rules $f$--- such that: $$\text{ for all } i \in I, \: f(i) \in O_i^a.$$

A \textbf{relational axiom} $a$ is associated with a parameter $K^a \in \mathbb{N}$ and is a mapping:
\begin{align*}
a: I^{K^a} &\to 2^{O^{K^a}}\\
(i_1,..., i_{K^a}) &\mapsto O^a_{i_1,...,i_{K^a}}.   
\end{align*}

Similarly, one typically looks for rules $f$ such that:

$$\text{ for all } ({i_1,...,i_{K^a}}) \in I^{K^a}, \: \big(f(i_1),...,f(i_{K^a})\big) \in O^a_{i_1,...,i_{K^a}}.$$ 

This is a general definition, but most relational axioms considered in different domains involve the comparison of outcomes obtained from only two different instances (\textit{i.e.} $K^a=2$ above). Returning to the above example, ``population monotonicity'' requires to consider, for a fixed social endowment, the outcomes of a rule when computed for a profile of preferences of a group of agents $N$ and a profile of a group $N \cup \{k\}$, $k \notin N$, which coincides with the preceding profile for agents in $N$. 

 In words, $a$ specifies for each tuple of instances a set of admissible tuples of outcomes, and the image of a tuple of instances under rule $f$ satisfies the requirements of $a$ if and only if it belongs to this set of admissible tuples of outcomes. There typically are many tuples of instances $(i_1,...,i_{K^a})$ for which $O^{a}_{i_1,...,i_{K^a}}=2^{O^{K^a}}$, that is, for which $a$ imposes no restriction whatsoever.\footnote{That is, a relational axiom associates with any tuple of instances \textit{related in a specific way} a \textit{specific} set of admissible tuples of outcomes, and does not associate with any other tuple of instances a restricted set of admissible tuples of outcomes.} In our example, ``population monotonicity'' is silent about pairs of preference profiles such that none is an extension of the other to a superset of agents.  

The reader can see that taking ${K^a}=1$ yields the definition of a punctual axiom; we however maintain this conceptually meaningful distinction for presentation purposes.\footnote{These definitions do not cover \textit{all} conceivable axioms. Some ``existential axioms'' (\cite{fishburn2015theory}), such as conditions on the range of a rule cannot be formulated in this way. They  still can be analysed in the way we propose in Sections \ref{performance} and \ref{axiomscomp}, by considering that they are satisfied by a rule either with degree $0$ or with degree $1$.

In addition, as noted in \cite{thomson2023axiomatics}, this distinction is \textit{soft} in the sense that some properties can be formulated both as punctual and as relational axioms. In such a case, the choice between these formulations is at the discretion of the researcher. \cite{schmidtlein2023voting} propose an interesting discussion on the different ways of defining an axiom.}

\subsection{The probability of satisfying axioms}\label{probability}

The vast majority of studies using stochastic preference models to generate instances, as well as the vast majority of papers involving a mathematical analysis of a set of instances endowed with a probability structure, have focused on \textit{i)} a single axiom at a time, and \textit{ii)} a punctual one. It is, however, possible to define the mass of instances for which a rule satisfies simultaneously several punctual or relational axioms.

Let $A$ be a finite set of $J \in \mathbb{N}$ axioms and $f: I \to O$ a rule. Collections of probabilities $(p^f_S)_{\emptyset \neq S \subseteq A} \in {[0,1]}^{2^{J-1}}$ such as the one given in Example \ref{First ex} are obtained in the following way. 

Let us first illustrate what the definition of the degrees of satisfaction would be if $A$ were only made of punctual axioms. Let $\xi$ be a $\sigma-$algebra  defined on $I$ and  $\mu$ a probability measure defined on $(I,\xi)$. Let $a$ be a punctual axiom, and  assume $D^f(a)=\{i \in I, %O_i^{a_j}\neq O,%
f(i) \in O^{a}_i\}$, the set of instances whose image under $f$ meets the requirements imposed in $a$, is measurable.   
Then the degree to which $f$ satisfies $a$ according to $\mu$ is simply  $\mu\big(D^f(a)\big)$. Similarly, the degree to which all the axioms in $S \subseteq A$ are simultaneously satisfied is simply $\mu\big(\bigcap_{a\in S}\: D^f(a)\big)$.

The general definition covering the case of sets of relational axioms requires additional notation but its principle is the same. In order to cover both types of axioms, set $K^a=1$ when $a$ is punctual, even if this parameter is not needed for the definition of a punctual axiom. Similarly to what precedes, define $D^f(a)=\big\{(i_1,...,i_{K^{a}}) \in I^{K^a},% O^{a_j}_{i_1,...,i_{K^j}}\neq O,%
\big(f(i_1),...,f(i_{K^a})\big) \in O^{a}_{i_1,...,i_{K^a}}\big\}$ for $a \in A$, the set of tuples of instances whose images under $f$ meet the requirements imposed in $a$.

Let $K^{A}=\max_{a \in A} K^a$ and consider $I^{K^A}$, endowed with a $\sigma-$algebra  $\xi^{{K^A}}$. %For all $\emptyset \neq S \subseteq A$, we let $K^{S}=\max_{a \in S} K^a$.
For a probability measure defined on $(I^{K^A},\xi^{K^A})$, denoted by $\mu$, the \textbf{degree} to which $f$ satisfies $S$, a non-empty subset of $A$, is given by:
\begin{equation*}
p^f_S=\mu\bigg(\bigg\{ \big(i_1,...,i_{K^{A}}\big) \text{ such that } \big(i_1,...,i_{K^a}\big) \in D^f(a) \text{ for all } a \in S \bigg\}\bigg).\footnote{We slighly abuse notation here by omitting the permutation used to restrict to the relevant $K^a$ instances for each $a \in S$. In addition, similarly to the case of a punctual axiom, $\bigg\{ \big(i_1,...,i_{K^{A}}\big) \text{ such that } \big(i_1,...,i_{K^a}\big) \in D^f(a) \text{ for } a \in S \bigg\}$ is assumed measurable in $(I^{K^A},\xi^{K^A})$.}
\end{equation*}

\textit{To summarise,} $p^f_S$ \textit{is the proportion ---computed from the probability measure} $\mu$\textit{--- of tuples of} $K^{A}$ \textit{instances such that, for any axiom} $a \in S$\textit{, the image under $f$ of their restriction to the} $K^a$ \textit{relevant instances satisfies the requirements of} $a$.\bigskip

\textbf{Remark:} For finite classes of problems, \textit{i.e.} for pairs $(I,O)$ such that $I$ and $O$ are finite, the measurability assumptions we introduced are innocuous. This is the case for voting problems with finitely many potential voters and candidates, and for allocation problems with finitely many potential agents and indivisible items. These two examples are of primary importance in our framework as they are studied in two fields of research where simulations are extensively used.

\section{How to measure the performance of rules ?}\label{performance} 

How can one assess the performance of a rule $f$ and, importantly, compare it with that of other rules, based on $p^f=(p^f_S)_{\emptyset \neq S\subseteq A}$, the probabilities of satisfying axioms in $A=\{a_1,...,a_J\}$, for which we proposed a definition in Section \ref{probability} ?

\subsection{Admissible collections of probabilities}\label{domainproba}

We address this issue by constructing a performance criterion defined for any $p$ in the set of possible collections of probabilities, a subset $P$ of $[0,1]^{2^{J-1}}$ characterised by consistency conditions relating, for all $\emptyset \neq S \subseteq A$, the probability to satisfy all subsets of $S$.

\textit{For instance}, the probability of satisfying all the axioms in $S$ cannot be larger than any of the probabilities of satisfying all of them but one, that is, to any probability in $(p_{S\setminus a})_{a\in S}$.\footnote{As is standard, we abuse notation by writing $S \setminus a$ rather than $S \setminus \{a\}$.} The probability of satisfying all axioms in $S$ is also constrained below. Select $a \in S$; taking $p_{S\setminus a}$ and $p_a$ as given, what is the worst case in terms of probability of satisfying $S\setminus a$ \textit{and} $a$? 
It corresponds to the situation in which the intersection of the sets of tuples of instances for which they are respectively satisfied has minimal measure, and the associated probability is $1-(1-p_{S\setminus a})-(1-p_a)=p_{S\setminus a}-(1-p_a)$ if it is positive, $0$ otherwise. In Example \ref{First ex}, given that the sets of tuples of instances for which $a_2$ and $a_3$ are satisfied have measure $0.8$ and $0.4$ respectively, the set for which they are simultaneously satisfied has at least measure $1-0.2-0.6=0.2$. 

We gave some necessary conditions for $p \in [0,1]^{2^J-1}$ to be an admissible collection of probabilities ---referred to as \textit{Fréchet inequalities}. They are not, nevertheless, sufficient, as the following example reveals. 

\begin{example}\label{Second Ex}
Let $A=\{a_1,a_2,a_3\}$ and consider:

\begin{equation*}
\begin{matrix}
&  & a_1 & a_2 & a_3 & a_1a_2 & a_1a_3 & a_2a_3 & A\\
& \mathbf{p} & 0.7 & 0.7 & 0.7 & 0.7 & 0.7 & 0.7 & 0.4
\end{matrix}.
\end{equation*}

The collection $p$ meets the two conditions above. However, letting $f$ be a rule associated with $p$, the set of tuples of instances for which $f$ satisfies $a_1$ and the set of tuples of instances for which $f$ satisfies $a_2$ are equal up to a set with measure $0$ ---if it were not the case, $f$ would not simultaneously satisfy the two axioms with the same probability as it satisfies each of them. Similarly, the set of tuples of instances for which $f$ satisfies $a_2$ and the set of tuples of instances for which $f$ satisfies $a_3$ are equal up to a set with measure $0$. In addition, all these sets have measure $0.7$. It is then inconsistent that $f$ simultaneously satisfy $a_1$, $a_2$ and $a_3$ with probability $0.4$ only: it must satisfy them with probability $0.7$.
\end{example}

The framework of \textit{probabilistic Boolean satisfyability} (\cite{nilsson1986probabilistic}, \cite{georgakopoulos1988probabilistic}) provides a natural formulation to identify necessary and sufficient conditions for a collection in $[0,1]^{2^J-1}$ to be a collection of probabilities. For the sake of brevity, we develop this idea in the appendix and simply state in this section the characterisation of $P$, the \textbf{set of possible collections of probabilities}.  

Before the statement, we need to introduce a specific family of collections. Let $\emptyset \neq S \subseteq A$. We define $p^{1,S}$ by:
\begin{align*}
    &p^{1,S}_{T}=1 \text{ if } \emptyset \neq T \subseteq S,\\
    &p^{1,S}_{T}=0 \text{ otherwise.}
\end{align*}

A rule with degrees of satisfaction given by  $p^{1,S}$ satisfies any combination of axioms which is not included in $S$ with probability $0$. In addition, it satisfies all the elements of $S$ with probability $1$, and \textit{thus}, all subsets of $S$ with probability $1$ ($S$ included). We comment more substantially on these collections later, as they correspond to special cases of the \textit{single-axiom-reducible problems} we introduce in Section \ref{charac}.

Let $\mathbf{0}$ denote the null vector in $\mathbb{R}^{2^J-1}$.\medskip

\begin{lemma}\label{ConvHull1}
$P$ is the closed convex hull of $\mathbf{0}$ and $(p^{1,S})_{\emptyset \neq S \subseteq A}$.    
\end{lemma}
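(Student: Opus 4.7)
The plan is to establish the two set inclusions directly, noting at the outset that the convex hull of the finitely many points $\{\mathbf{0}\}\cup\{p^{1,S}:\emptyset\neq S\subseteq A\}$ in $\mathbb{R}^{2^{J}-1}$ is automatically compact, so the qualifier ``closed'' in the statement is not binding. For notational convenience I extend the definition to $p^{1,\emptyset}=\mathbf{0}$, consistent with the formula $p^{1,S}_T=\mathbf{1}[\emptyset\neq T\subseteq S]$. For the inclusion $P\subseteq\mathrm{conv}\big(\{p^{1,T}\}_{T\subseteq A}\big)$, fix $p\in P$ witnessed by a rule $f$ and a probability measure $\mu$ on $(I^{K^A},\xi^{K^A})$ with $p_S=\mu\big(\bigcap_{a\in S}D^f(a)\big)$ for all $\emptyset\neq S\subseteq A$, where each $D^f(a)$ has been lifted to $I^{K^A}$ as in Section~\ref{probability}. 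For each $T\subseteq A$, define the Boolean atom
\[
B_T \;=\; \bigcap_{a\in T}D^f(a)\;\cap\;\bigcap_{a\in A\setminus T}\big(I^{K^A}\setminus D^f(a)\big),
\]
which is measurable as a finite Boolean combination of measurable sets, and set $\alpha_T=\mu(B_T)$. The family $\{B_T\}_{T\subseteq A}$ partitions $I^{K^A}$, so the $\alpha_T$ are non-negative and sum to one, and the identity $\bigcap_{a\in S}D^f(a)=\bigsqcup_{T\supseteq S}B_T$ yields
\[
p_S \;=\; \sum_{T\supseteq S}\alpha_T \;=\; \sum_{T\subseteq A}\alpha_T\,p^{1,T}_S,
\]
exhibiting $p$ as the desired convex combination of $\mathbf{0}$ and the $p^{1,S}$.

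For the reverse inclusion, I exhibit, for any non-negative weights $(\lambda_T)_{T\subseteq A}$ summing to one, a class of problems, a rule and a probability measure realising $p=\sum_T\lambda_T\,p^{1,T}$. A minimal, purely punctual realisation works: take $I=2^A$ equipped with the discrete measure $\mu(\{T\})=\lambda_T$, set $O=2^A$ and $f$ equal to the identity, and for each $a\in A$ define the punctual axiom by $O^a_T=O$ if $a\in T$ and $O^a_T=\emptyset$ otherwise. Then $D^f(a)=\{T\subseteq A:a\in T\}$, and for every $\emptyset\neq S\subseteq A$ one has $\mu\big(\bigcap_{a\in S}D^f(a)\big)=\sum_{T\supseteq S}\lambda_T=p_S$, as required.

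The main delicate point lies not in the algebra, which is essentially the Boolean-atom decomposition underlying probabilistic Boolean satisfiability, but in justifying that the construction of the previous paragraph is legitimate within the formalism of Section~\ref{rulesaxioms}. The inclusion $\subseteq$ is robust, since it only decomposes a given admissible $p$; the inclusion $\supseteq$, by contrast, relies on the freedom to choose the pair $(I,O)$, the rule $f$ and the axioms themselves. If one adopted a stricter notion of ``admissible'' class of problems (for instance imposing symmetry constraints on the $D^f(a)$ when relational axioms are present), the realisation would have to be rebuilt within that restriction; but as no such constraint is imposed by the general definitions given so far, the above finite-atom construction suffices and both inclusions close, giving $P$ as the convex hull of $\mathbf{0}$ and the family $(p^{1,S})_{\emptyset\neq S\subseteq A}$.
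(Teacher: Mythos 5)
Your proof is correct, but it takes a genuinely different route from the paper's. The paper defines $P$ formally (in its appendix) through probabilistic Boolean satisfiability: $p\in P$ iff $\tilde p = H\cdot\pi$ for some distribution $\pi$ over the $2^J$ possible worlds; its proof then observes that $P$ is a compact convex set, hence the closed convex hull of its extreme points, and identifies those extreme points as exactly the $\{0,1\}$-valued collections $\mathbf{0}$ and $(p^{1,S})_{\emptyset\neq S\subseteq A}$ (any collection with a fractional coordinate being non-extreme by an $\epsilon$-perturbation). You instead prove the two inclusions directly: the Boolean-atom decomposition $B_T$ with weights $\alpha_T=\mu(B_T)$ for $P\subseteq\mathrm{conv}$, and an explicit finite class of problems, identity rule and punctual axioms realising any convex combination for the reverse inclusion. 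Note that you are implicitly working with $P$ as ``collections realisable by some rule and measure,'' whereas the paper's $P$ is the abstract consistency set; the two coincide, and your argument transfers verbatim to the paper's definition (your atoms $B_T$ are exactly the possible worlds, your $\alpha_T$ the weights $\pi_{W_T}$, and the reverse inclusion becomes immediate since each $p^{1,S}$ is the image of a point mass and $P$ is convex as the linear image of a simplex), so this is a framing difference rather than a gap. What your route buys: the atom decomposition is constructive and anticipates the contribution coefficients $\alpha^p$ used later in the paper, and your realisation construction establishes the slightly stronger fact that every point of the hull is attained by an actual rule with punctual axioms. What the paper's route buys: it stays entirely inside the formal definition of $P$ (no need to build classes of problems), and the extreme-point identification combined with the linear independence of the $p^{1,S}$ immediately delivers the uniqueness of the decomposition $p=\sum_{T\in\mathcal{I}^p}\alpha^p_T p^{1,T}$, which the paper needs right afterwards.
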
\medskip

The proof of Lemma \ref{ConvHull1}, as well as all subsequent proofs, is in the appendix.

Importantly, when defining a measure of performance \textit{for all} $p \in P \subseteq [0,1]^{2^J-1}$ (see Section \ref{charac}), we look for a measure that is applicable to \textit{any} set of $J$ axioms. For example, for two specific axioms $a_1$ and $a_2$ such that $a_1$ logically implies $a_2$ on the considered class of problems, it is impossible to find a probability measure on the set of instances such that $a_1$ is satisfied with a larger probability than $a_2$. Thus, a measure of performance that would be specifically tailored to $A=\{a_1,a_2\}$ would not need be defined for all consistent collections in $[0,1]^{3}$.

\subsection{Normative desirability of axioms and sets of axioms}\label{capacity}

As axioms most often reflect normative principles that matter to different extents, a key additional element for this evaluation needs to be introduced. More precisely, not only can an axiom be more valuable in the eyes of a researcher or a designer than another one, but \textit{synergies} are likely to emerge in the combination of axioms. For instance, in the microeconomic allocation framework, satisfying ``efficiency'' may be more or less valued than satisfying ``no-envy'', and, furthermore, the value of satisfying another fairness criterion such as ``egalitarian equivalence'', given the satisfaction of ``no-envy'', may be reduced, so that, equivalently, it may become more desirable to satisfy the efficiency condition. In this perspective, the example of a non-manipulability axiom such as ``strategy-proofness'' is also highly instructive. Indeed, the likelihood of truthful revelation is all the more important as other axioms involving requirements on preferences are satisfied\footnote{The satisfaction of ``strategy-proofness'' by itself may, of course, still be appreciated as, for example, it can be interpreted as preventing agents with lower ability to compute optimal actions from being disadvantaged (this interpretation has played an important role in the school choice literature (\textit{e.g.} \cite{artal2017}), but the interest of this axiom mainly lies in its interaction with other axioms.}, and, conversely, the satisfaction of these other axioms is all the more valuable that it is likely that they are applied to the actual ---truthfully revealed---  preferences.   

This observation leads us to allow the \textit{intrinsic valuation} of a non-empty combination $ S \subseteq A$ of the axioms to differ from the sum of the intrinsic valuations of axioms in $S$. As a consequence, we define the \textbf{set of possible intrinsic valuations} as the \textbf{set of capacities} on $A$:$$U=\bigg\{(u_S)_{\emptyset \neq S\subseteq A} \in \mathbb{R}_+^{2^{J}-1}, u_T \leq u_S \text{ if } T \subset S, \text{ for all } S \subseteq A \bigg\}.\footnote{\textit{We abuse language here as a capacity on set $A$, standardly, is also defined for the empty set, for which it returns value $0$.} The set $U$ would be appropriately referred to as a projection on $\mathbb{R}_+^{2^{J}-1}$ of the set of capacities. We however omit this qualification: an element of $U$ is called a capacity.}$$

By $U_{st}$ we denote the set of strictly monotonic capacities.\footnote{That is, $U_{st}=\bigg\{(u_S)_{\emptyset \neq S\subseteq A} \in \mathbb{R}_+^{2^{J}-1}, u_T < u_S \text{ if } T \subset S, \text{ for all } S \subseteq A \bigg\}.$}

The weak monotonicity assumption, with respect to inclusion, embedded in the use of capacities, can be interpreted as meaning that all axioms under consideration are normatively desirable. \textbf{Super-additive capacities} on $A$ are of special interest for our analysis. Let 
$$ U_{s.a}=\bigg\{ u \in U, u_S\geq u_T + u_{T'} \text{ if }  T \cup T' = S \text{ and }T \cap T'=\emptyset \text{ , for all } S \subseteq A\bigg\}. $$

A super-additive intrinsic valuation is interpreted as the result of \textit{complementarities} between all the considered axioms and is, for example, well suited to account for the interaction between ``strategy-proofness'', ``efficiency'' and ```no-envy'' as suggested above. From a general point of view, we see the use of super-additive valuations as the one most adequate to the typical problems studied in normative economics where the considered axioms are particular formulations of general and independent principles.\footnote{In the example we gave involving ``no-envy'' and ``egalitarian equivalence'', though, a super-additive valuation would not capture the effect we described.} 
We say that axioms in set ${A}$ are \textbf{complementary} when we consider a super-additive capacity on $A$.\footnote{Axioms in set $A$ are substitutes if the studied capacity is sub-additive. More generally, a capacity reflects complementarities between the axioms of set $T\subseteq A$ if for all disjoint 
$S,S' \subset T$, $u_T\geq u_S+u_{S'}$, and substitutability if the reverse inequality holds.}\bigskip

\textbf{Remark:} Clearly, we have a cardinal interpretation of intrinsic valuations: capacities express the intensity of preferences between combinations of axioms. This is key, for example, to capture complementarity through super-additivity. Note however that the rich information offered by the use of capacities in this framework might be partly, or completely, disregarded. One can restrict attention to capacities that only depend on the number of axioms that are satisfied: capacities of the form $u: S \in 2^A\setminus \{\emptyset\} \mapsto g(|S|) \in \mathbb{R}_+$, with $g: \mathbb{R}_+ \to \mathbb{R}_+$ non-decreasing.\footnote{For any set $B$, $|B|$ denotes the cardinality of $B$.} Such capacities may reflect the view that \textit{all} the axioms in $A$ are complementary, $g$ being convex, or that they all  are substitutes, $g$ being concave, but they imply neutrality across combinations of axioms of the same size. One could refer to these cases as \textit{uniform complementarity} and \textit{uniform substitutability}. \textit{Agnosticism} with respect to axioms can be captured through the use of the capacity $u:S \in 2^A\setminus \{\emptyset\} \mapsto |S| \in \mathbb{R}_+$.\footnote{The analysis would remain the same if we were to restrict attention to the set of normalised capacities, \textit{i.e.} the set of capacities such that $A$ is given value 1.} 

\subsection{Characterisation of the measure}\label{charac}

We have introduced the key ingredients for the construction of a performance measure, namely, the degrees of satisfaction, defined through probabilities, and the intrinsic valuations, defined through a capacity.

A \textbf{(performance) measure} is a mapping: $$m: U \times P \to \mathbb{R}_+,$$satisfying the following normalisation condition: for all $p \in P$, $$m(\mathbf{0},p)=0.$$ 

This natural property posits that the degrees to which a given rule satisfies the axioms in $A$ do not matter if these axioms are irrelevant to the decision maker ---note that a measure only takes non-negative values.

Given a pair of an intrinsic valuation and a collection of probabilities, the most intuitive measure for the performance of a rule associated with the collection arguably consists in taking the standard weighted sum: 
\begin{align*}
\Tilde{m}: U \times P &\to \mathbb{R}\\
(u,p) &\mapsto \sum_{\emptyset \neq S\subseteq A}\:u_Sp_S.   
\end{align*}

However, considering a capacity that positively depends on the cardinality of combinations, one can see that such a measure \textit{double counts} the satisfaction of some sets of axioms. %\footnote{Any set of axioms $S=\{a_1,...,a_K\}\subseteq A$ is denoted $a_1...a_K$ in all the tables we use.} 
\begin{example}\label{Fourth Ex} Let $A=\{a_1,a_2,a_3\}$ and consider:
\begin{equation*}
\begin{matrix}
& & a_1 & a_2 & a_3 & a_1a_2 & a_1a_3 & a_2a_3 & A\\
& \mathbf{u} & 1 & 1 & 1 & 3 & 3 & 3 & 6\\
& \mathbf{p}  & 0.7 & 0.8 & 0.5 & 0.7 & 0.25 & 0.3 & 0.25
\end{matrix}.
\end{equation*}

As $p_{a_1}=p_{a_1a_2}$, the rule associated with $p$ satisfies $a_1$ with exactly the same probability as it satisfies a combination of $a_1$ and another axiom while the intrinsic valuation of this combination incorporates the intrinsic valuation of $a_1$. Then, it is questionable that $a_1$ should have an impact on the measure under $p$, as is the case with $\Tilde{m}$. \textit{As a consequence, in our axiomatic approach, we will look for measures taking into account the difference between the probability with which a rule satisfies a given combination of axioms and the probability with which it satisfies any superset of it.} 
\end{example}

We first introduce a measure that would not double count the satisfaction of $a_1$ in Example \ref{Fourth Ex}. For all non-empty $S \subset A$, let $\hat{p}_S =\max_{T: S\subset T} \:\{p_T\}$ and let $\hat{p}_A=0$. The value $\hat{p}_S$ gives the maximal probability associated with a superset of $S$. Consider the following performance measure:
\begin{align*}
\hat{m}: U \times P &\to \mathbb{R}_+\\
(u,p) &\mapsto \sum_{\emptyset \neq S\subseteq A}\:u_S(p_S-\hat{p}_S).   
\end{align*}

We will call $\hat{m}$ the ``weighted minimal difference measure''. 

The redundancy we identified in the way the standard weighted sum $\Tilde{m}$ is computed is not merely a cardinal anomaly in the sense that it impacts the ranking of rules: in Example \ref{Fifth Ex} below, a rule associated with $p'$ performs better than a rule associated with $p$ according to the measure $\hat{m}$, which does not double-count in this example either\footnote{The value $\hat{m}(u,p)$ is computed as a weighted sum, so that a given set of axioms impacts this value if and only if it is given a non-null weight. Yet, in this example, all the sets which are satisfied with the same probability as one of their supersets are given weight $0$.}, while the standard weighted sum $\Tilde{m}$ yields the opposite conclusion. Note that $u$ only depends on the cardinality of combinations.
\begin{example}\label{Fifth Ex} Let $A=\{a_1,a_2,a_3\}$ and consider:
\begin{equation*}
\begin{matrix}
&  & a_1 & a_2 & a_3 & a_1a_2 & a_1a_3 & a_2a_3 & A\\
& \mathbf{u}  & 1 & 1 & 1 & 5 & 5 & 5 & 15\\
& \mathbf{p} & 0.7 & 0.7 & 0.7 & 0.6 & 0.6 & 0.6 & 0.6\\
& \mathbf{p'} & 1 & 1 & 0.45 & 1 & 0.45 & 0.45 & 0.45
\end{matrix}.
\end{equation*}

One has $\Tilde{m}(u,p)=20.1$, $\Tilde{m}(u,p')=18.7$, while $\hat{m}(u,p)=9.3$ and $\hat{m}(u,p')=9.5$.
\end{example}

We now identify another desirable property of a performance measure ---which is not satisfied by the standard weighted sum $\Tilde{m}$.

Let us place ourselves in the case where $A$ is a singleton. In this \textit{single-axiom case} where the considered rule satisfies axiom $a$ with probability $p_a$, while $a$ is given valuation $u_a$, it is fair to say that, given the cardinal interpretation of intrinsic valuations, the natural way to measure the performance of this rule is to take the ``expected valuation'' $p_au_a$. However, $\Tilde{m}$ does not constitute an appropriate generalisation to multiple axioms of this natural single-axiom measure.

Let $\lambda \in [0,1]$, $\emptyset \neq S \subseteq A$, and $p^{\lambda,S} \in {P}$ defined by:
\begin{align*}
    &p^{\lambda,S}_{T}=\lambda \text{ if } \emptyset \neq T \subseteq S,\\
    &p^{\lambda,S}_{T}=0 \text{ otherwise.}
\end{align*}

The collection $p^{\lambda,S}$ lies on the edge of $P$ between the extreme points $\mathbf{0}$ and $p^{1,S}$. Here is an illustration of $p=p^{0.6,a_1a_3}$, when $A=\{a_1,a_2,a_3\}$:
\begin{equation*}
\begin{matrix}
& & a_1 & a_2 & a_3 & a_1a_2 & a_1a_3 & a_2a_3 & A\\
& \mathbf{p} & 0.6 & 0 & 0.6 & 0 & 0.6 & 0 & 0
\end{matrix}.
\end{equation*}

A rule with degrees of satisfaction given by  $p^{\lambda,S}$ satisfies any combination of axioms which is not included in $S$ with probability $0$. In addition, it satisfies $S$ with exactly the same probability, $\lambda$, as it satisfies all the elements of $S$. As a consequence, we claim that the problem of measuring the performance of this rule, for a given intrinsic valuation $u$, has the same structure as a single-axiom problem in which the considered axiom is satisfied with probability $\lambda$ and is given valuation $u_S$. \textit{Hence, a measure that would provide a consistent generalisation to multiple axioms of the natural ``expected valuation'' defined for the single-axiom case, would, in contrast to} $\Tilde{m}$\textit{, return, for any} $(u,p^{\lambda,S}) \in U \times P$\textit{, the image} $p_Su_S=\lambda u_S$.\footnote{Of course, this already prevents some form of double-counting: when the sets of axioms that the rule satisfies with \textit{the same positive probability} form a chain, then, only the satisfaction of the maximal set of this chain should be taken into account in the measure of the performance.}\medskip 

While the observation that a measure should \textit{count once and only once} the satisfaction of a given subset of axioms requires some additional work in order to translate into a mathematical principle, this second observation immediately yields the following requirement:\\\bigskip

\textbf{\setword{Expected valuation for single-axiom-reducible problems}{Exp}}\medskip

Let $\lambda \in [0,1]$, $\emptyset \neq S \subseteq A$.

Let $u \in {U}$. Then, $$m({u},p^{\lambda,S})=\lambda u_S.$$\medskip

For example, the weighted minimal difference measure $\hat{m}$ satisfies this property. So does $(u,p) \mapsto \max_{\emptyset \neq S \subseteq A}\: u_Sp_S.$

\medskip

In the discussion of Example \ref{Fourth Ex} and the standard weighted sum $\Tilde{m}$, we identified an anomaly of double-counting: because there is a superset of $a_1$ that the rule satisfies with \textit{the same probability} as the one with which it satisfies $a_1$, there is no increment, in terms of probability, induced by focusing on the satisfaction of axiom $a_1$ only, and thus, $a_1$ should have no impact on the value $m(u,p)$. The fact that $P$ is $(2^J-1)-$dimensional convex polytope (Lemma \ref{ConvHull1}) is important in providing a definition of the increment in probability  associated with a non-empty set $S$ under collection $p$, covering the case in which $p_S>p_T$ for any superset $T$ of $S$.\bigskip

\textit{For all $p \in P\setminus \{\mathbf{0}\}$, there exists a} unique \textit{pair $\big(\mathcal{I}^p, (\alpha^p_T)_{\emptyset \neq T \subseteq A}\big)$,  where $\mathcal{I}^p$ is a non-empty family of non-empty subsets of $A$, and $(\alpha^p_T)_{\emptyset \neq T \subseteq A}$ a family of real numbers, such that:}

\begin{itemize}
    \item $0< \alpha^p_T  \leq 1$ \textit{for all} $T \in \mathcal{I}^p$ \textit{and} $\sum_{T \in \mathcal{I}^p} \alpha^p_T \leq 1$,
    \item $\alpha^p_T=0$ \textit{for all} $ \emptyset \neq T \in 2^A\setminus \mathcal{I}^p,$ \textit{and}
\end{itemize} 

$$p=\sum_{T \in \mathcal{I}^p} \alpha^p_T p^{1,T} \: \: \: \: \: \bigg(=\sum_{\emptyset \neq T \subseteq A} \alpha^p_T p^{1,T}+(1-\sum_{T \in \mathcal{I}^p} \alpha^p_T)\mathbf{0}\bigg).$$     
\bigskip

We write the trivial equality in parenthesis above in order to stress that, in general, the sum $\sum_{T \in \mathcal{I}^p} \alpha^p_T$ is not one. 

For the collection $\mathbf{0}$, we allow for the associated family of subsets to be empty and write $\mathcal{I}^{\mathbf{0}}=\emptyset$.

Importantly, there is a generic procedure enabling one to determine, for any $p \in P$, $\big(\mathcal{I}^p, (\alpha^p_T)_{\emptyset \neq T \subseteq A}\big)$. \textit{The computation of} $(\alpha^p_T)_{\emptyset \neq T \subseteq A}$ \textit{involves a recursive equation that makes clear that} $\alpha^p_T$ \textit{gives the increment we described above.} Let $p \in P$; we proceed inductively:

\begin{itemize}
    \item[] \textbf{Step 1.} If $p_A>0$, \text{ set } $\mathcal{I}_1^p=\{A\}$ and $\alpha^p_A=p_A$ , otherwise, set $\mathcal{I}_1^p=\emptyset$ and $\alpha^p_A=0;$
    \item[] \textbf{Step} $\mathbf{k}$ \textbf{(for} $\mathbf{2\leq k \leq J-1}$\textbf{).} Set $\mathcal{I}_k^p=\mathcal{I}^p_{k-1} \cup \{T \subseteq A \text{ with } |T|=J-k \text{ and } p_T-\sum_{S: T \subset S} \alpha^p_S > 0\}$, and, for all $T \subseteq A$ with $|T|=J-k$, $\alpha^p_T=p_T-\sum_{S: T \subset S} \alpha^p_S$.
    \item[] \textbf{Define} $\mathbf{\mathcal{I}^p=\mathcal{I}^p_{J-1}.}$
\end{itemize}

\begin{example}\label{Eighth} Let us illustrate the computation of $\alpha^p=(\alpha^p_T)_{\emptyset \neq T \subseteq A}$, with $A=\{a_1,a_2,a_3\}$:
\begin{equation*}
\begin{matrix}
& & {{a_1}} & a_2 & a_3 & {{a_1a_2}} & a_1a_3 & a_2a_3 & A\\
& \mathbf{p} & 0.7 & 0.8 & 0.5 & 0.7 & 0.25 & 0.3 & 0.25\\
& \boldsymbol{\alpha^p} & 0 & 0.05 & 0.2 & 0.45 & 0 & 0.05 & 0.25
\end{matrix}.
\end{equation*}    
\end{example}

Since $\alpha^p_A=p_A$ for all $p \in P$, one sees from the recursive equation $\alpha^p_T=p_T- \sum_{S:T\subset S}\alpha^p_S$ that $\alpha^p_T$ gives the increment in probability, induced by focusing on the satisfaction of axioms in $T$, rather than focusing on axioms in $T$ together with additional axioms. In other words, $\alpha^p_T$ gives the probability that the rule associated with $p$ satisfies all the axioms in $T$ and no other axiom. That is why we refer to $\alpha^p_T$ as the \textbf{contribution} of set $T$ under (the collection of probabilities) $p$.

Finally, this recursive definition implies that $\alpha^p$ is the solution of a \textit{Möbius inversion problem} (see Theorem \ref{theorem1} just below).

Motivated by the discussion of Example \ref{Fourth Ex}, we require that for all $p\in P$, and for all $u \in U$, the impact of $\emptyset \neq T \subseteq A$ on a measure $m$, \textit{given valuation} $u$, depend on the contribution of $T$ under $p$:\\\bigskip

\textbf{\setword{Same contribution---same impact}{Cont}}\medskip

Let $p,p' \in {P}$ and $\emptyset \neq S \subseteq {A}$ be such that $\alpha^p_S=\alpha^{p'}_S$.

Let $u\in {U}$ and $u^S \in U$ be such that $u^S_T={u}_T$ for all $T\neq S$. 

Then,  
$$ m(u^S,p)-m({u},p) = m(u^S,p')-m({u},p').$$\medskip

Valuations ${u}$ and $u^S$ above only differ, potentially, in their component associated with the combination $S$. Then, the differences above measure the impact of $S$ on measure $m$ under $p$ and the impact of $S$ under $p'$, respectively, \textit{given valuation} $u$\textit{, when one changes} $u_S$ \textit{to} $u^S_S$. We stress that this principle does not imply that the impact of an axiom under $p$ be the same for any $u \in U$.\medskip 

We are now able to present the characterisation of a unique performance measure, under the additional requirement that the projection of the measure on $U$ be a continuous mapping, where $U$ is endowed with the usual induced topology from $\mathbb{R}^{2^J-1}_+$.\footnote{For all $p \in P$, $m^p: u \in U \mapsto m(u,p) \in \mathbb{R}_+$ is a continuous mapping.} We simply write that the measure is continuous on $U$.\medskip

\begin{theorem}\label{theorem1}
A performance measure $m: U  \times P \to \mathbb{R}_+$, continuous on $U$, satisfies 
\begin{itemize}
    \item \ref{Cont}, and
    \item \ref{Exp}
\end{itemize}\medskip{}  if and only if it is the \textbf{weighted Möbius performance measure}, $\ddot m$:  

   \begin{align*}
    \ddot m : U \times P &\to \mathbb{R}_+\\
    (u,p) &\mapsto \sum_{\emptyset \neq S \subseteq A} u_S \bigg( \sum_{T: S \subseteq T}\: (-1)^{|T\setminus S|} p_T \bigg).
\end{align*}

\end{theorem}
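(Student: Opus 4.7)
The plan is to prove both directions by exploiting the Möbius form
$\ddot m(u,p)=\sum_{\emptyset\neq S\subseteq A} u_S\,\alpha^p_S$, which follows from the standard Möbius inversion identity $\alpha^p_S=\sum_{T\supseteq S}(-1)^{|T\setminus S|}p_T$ together with the recursive definition of $\alpha^p$ given in the paper. In this form, $\ddot m$ is additively separable in $S$, with the $S$-summand depending on $p$ only through $\alpha^p_S$, which makes both directions of the equivalence transparent.

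For the ``if'' direction I would check the four conditions on $\ddot m$. Normalisation $\ddot m(\mathbf{0},p)=0$ and continuity in $u$ are immediate since $\ddot m$ is linear in $u$ for fixed $p$. For \ref{Exp}, I compute $\alpha^{p^{\lambda,S}}_T$ directly: for $T\not\subseteq S$ all $p^{\lambda,S}_{T'}$ with $T'\supseteq T$ vanish, and for $\emptyset\neq T\subsetneq S$ the Möbius sum reduces to $\lambda\sum_{T'\subseteq S\setminus T}(-1)^{|T'|}=0$ by the elementary binomial identity; only $T=S$ survives, yielding $\alpha^{p^{\lambda,S}}_S=\lambda$. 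Hence $\ddot m(u,p^{\lambda,S})=\lambda u_S$. For \ref{Cont}, note that if $u$ and $u^S$ differ only at coordinate $S$, then $\ddot m(u^S,p)-\ddot m(u,p)=\alpha^p_S(u^S_S-u_S)$, which depends on $p$ only through $\alpha^p_S$.

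For the ``only if'' direction, fix $p\in P$ and $u\in U$, and build a finite path $\mathbf{0}=u^{(0)},u^{(1)},\dots,u^{(2^J-1)}=u$ of capacities in which consecutive vectors differ in exactly one coordinate. The key device is to enumerate the non-empty subsets $S_1,\dots,S_{2^J-1}$ of $A$ in order of \emph{decreasing cardinality}, and let $u^{(i)}$ agree with $u$ on $\{S_1,\dots,S_i\}$ and vanish elsewhere. Since for any $T\subset T'$ the larger set $T'$ appears earlier in the enumeration, one checks case by case that each $u^{(i)}\in U$. At the step going from $u^{(i-1)}$ to $u^{(i)}$, the only coordinate that changes is $S_i$, from $0$ to $u_{S_i}$. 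I then invoke \ref{Cont} with the auxiliary collection $p':=p^{\alpha^p_{S_i},S_i}\in P$ (admissible by Lemma~\ref{ConvHull1} as a convex combination of $\mathbf{0}$ and $p^{1,S_i}$), which by construction satisfies $\alpha^{p'}_{S_i}=\alpha^p_{S_i}$. Combining \ref{Cont} with \ref{Exp} applied to $p^{\alpha^p_{S_i},S_i}$ gives
\[
m(u^{(i)},p)-m(u^{(i-1)},p)=m(u^{(i)},p')-m(u^{(i-1)},p')=\alpha^p_{S_i}\,u_{S_i}.
\]
Telescoping from $i=1$ to $2^J-1$ and using the normalisation $m(\mathbf{0},p)=0$ (obtained from \ref{Exp} with $\lambda=0$, or equivalently from $m(\mathbf{u},\mathbf{0})=0$ via \ref{Cont}) yields $m(u,p)=\sum_S \alpha^p_S u_S=\ddot m(u,p)$.

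The main technical obstacle is the monotonicity constraint defining $U$: I must ensure that every intermediate $u^{(i)}$ is a capacity, otherwise neither \ref{Cont} nor \ref{Exp} applies. The decreasing-cardinality ordering settles this, since when $T\subset T'$ with $T$ already ``activated'' and $T'$ not, we would need $|T'|>|T|$ yet $T'$ enumerated later than $T$, a contradiction. The only other point to watch is the verification $p^{\alpha^p_S,S}\in P$, which is a direct corollary of Lemma~\ref{ConvHull1}. Interestingly, continuity of $m$ on $U$ is not actively used in the uniqueness argument because \ref{Exp} already pins down the values on the entire family $\{p^{\lambda,S}\}$ and \ref{Cont} propagates this information to arbitrary $p$; it appears as a natural regularity property that $\ddot m$ manifestly possesses.
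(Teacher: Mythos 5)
Your proof is correct, but the necessity direction follows a genuinely different route from the paper's. The paper restricts attention to strictly monotonic capacities $U_{st}$, perturbs a single coordinate by $\epsilon$, uses \ref{Cont} and \ref{Exp} to compute the difference quotient, lets $\epsilon\to 0$ to get a constant gradient $(\alpha^p_S)_S$ on the open connected set $U_{st}$, integrates to obtain $m(u,p)=\sum_S u_S\alpha^p_S+b^p$ there, and only then invokes continuity to extend the formula to all of $U$ so that the definitional normalisation $m(\mathbf{0},p)=0$ (the zero capacity lies in $U\setminus U_{st}$) kills the constant $b^p$. Your argument replaces this analytic step by a purely combinatorial one: the decreasing-cardinality path $\mathbf{0}=u^{(0)},\dots,u^{(2^J-1)}=u$ stays inside $U$ (your monotonicity check is right), each one-coordinate step is evaluated exactly via \ref{Cont} against $p^{\alpha^p_{S_i},S_i}$ and \ref{Exp}, and the telescoping reaches the zero capacity itself, so continuity is never used. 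What each approach buys: the paper's argument is local and needs the slack of $U_{st}$ plus a limiting/continuity step; yours is global, elementary, and shows that, given the normalisation built into the definition of a performance measure, the continuity hypothesis in Theorem \ref{theorem1} is actually redundant. This is a stronger conclusion than the paper's, and it is in tension with the paper's independence example for continuity; note, however, that that example assigns $m(\mathbf{0},p)=4b^p>0$ for generic $p$ (since $\mathbf{0}\notin U_{st}$), so it does not satisfy the normalisation condition that both you and the paper rely on, which is why no contradiction arises with your argument.

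Two small points to fix. First, your parenthetical justification of $m(\mathbf{0},p)=0$ is off: \ref{Exp} with $\lambda=0$ gives $m(u,p^{0,S})=0$, i.e.\ a statement about the zero \emph{collection of probabilities}, not the zero \emph{capacity}; indeed $m(u,p)=\ddot m(u,p)+c(p)$ with $c\geq 0$ vanishing on the family $\{p^{\lambda,S}\}$ satisfies both named axioms, so the constant at $u=\mathbf{0}$ is pinned down only by the normalisation in the definition of a performance measure, which you may simply cite. Second, you assert the identity $\alpha^p_S=\sum_{T\supseteq T\supseteq S}$-type Möbius inversion as standard; this is fine, but the paper derives it explicitly (via Rota's formula) from the recursion $p_S=\sum_{T:S\subseteq T}\alpha^p_T$, and a one-line induction as in the paper would make your write-up self-contained.
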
\medskip

The name of the characterised measure comes from the fact that the function associating $(p_S)_{\emptyset \neq S \subseteq A}$ with $\bigg(\sum_{T: S \subseteq T}\: (-1)^{|T\setminus S|} p_T\bigg)_{\emptyset \neq S \subseteq A}$ is the \textit{Möbius transform of the set function} $p: 2^A\setminus \emptyset \to [0,1]$ for the following partial order $\geq$ on $2^A\setminus \emptyset$:  $$\text{for all } S,T, \: \: S \geq T \iff S \subseteq T.$$

The reader is referred to the appendix and to \cite{grabisch2016set} (Chapter 2) for more details on the definition of the Möbius transform of a set function, given an arbitrary partial order defined on a (finite) set.\footnote{For all $(u,p) \in U \times P$, $\ddot m(u,p)= \sum_{\emptyset \neq S \subseteq A} p_S \bigg( \sum_{T\subseteq S}\: (-1)^{|S\setminus T|} u_T \bigg)$. The term in parenthesis corresponds to the Möbius transform, for the usual partial order defined by set inclusion, of a capacity $(u_S)_{S \subseteq A} \in \mathbb{R}^{2^J}$, returning $0$ for the empty set.}\bigskip

\textbf{Remark:} Let us insist on the precise role of the continuity assumption. If a performance measure $m: U \times P \to \mathbb{R}_+$ satisfies \textit{same contribution---same impact} and \textit{expected valuation for single-axiom-reducible problems}, then, for all $p \in P$, and all $u \in U_{st}$, $$m(u,p)=\sum_{\emptyset \neq S \subseteq A} u_S \bigg( \sum_{T: S \subseteq T}\: (-1)^{|T\setminus S|} p_T \bigg) +b^p, \text{ for some } b^p \in \mathbb{R},$$

where $b^{p^{\lambda,S}}=0$ for all $\lambda \in [0,1]$, and all non-empty $S \subseteq A$.\bigskip

We conclude this section with a comparison between the weighted Möbius performance measure $\ddot m$ and the measure $\hat{m}: (u,p) \mapsto \sum_{\emptyset \neq S\subseteq A}\:u_S(p_S-\hat{p}_S)$, illustrating in particular why $\ddot m$ properly accounts for the intrinsic valuation associated with any non-empty subset of $A$, while $\hat{m}$ does not. 
\begin{example}\label{Sixth Ex}
Let $A=\{a_1,a_2,a_3\}$; we let $w^{\hat{m}}$ and $w^{\ddot m}$ denote the collections of weights according to $\hat{m}$ and to $\ddot m$, respectively:
\begin{equation*}
\begin{matrix}
&  & a_1 & a_2 & a_3 & a_1a_2 & a_1a_3 & a_2a_3 & A\\
& \mathbf{p} & 0.55 & 0.6 & 0.2 & 0.35 & 0.05 & 0.15 & 0\\
& \mathbf{w^{\hat{m}}} & 0.2 & 0.25 & 0.05 & 0.35 & 0.05 & 0.15 & 0\\
& \mathbf{w^{\ddot m}} & 0.15 & 0.1 & 0 &0.35 & 0.05 & 0.15 & 0\\
\end{matrix}.
\end{equation*}

Let $u$ be a capacity defined on $A$. The two measures return the same weight for any subset of cardinality at least $2$. Let us focus on axiom $a_3$. In weighting $u_{a_3}$ by $0.2-0.05-0.15=0$, $\ddot m$ takes into account the fact that, by the monotonicity of $u$, in giving the weight $0.05$ to $u_{a_1a_3}$, and $0.15$ to $u_{a_2a_3}$, a portion $0.15+0.05$ of $u_{a_3}$ has already been incorporated in the measure of the performance of the rule. By focusing on the probability of satisfaction of a single superset of $a_3$ ($\hat{m}$ violates \textit{same contribution---same impact}), namely $a_2a_3$, $\hat{m}$ \textit{overweights} the valuation of $a_3$.\footnote{The reason why $\hat{m}$ did not overweight the valuation of $a_1$ in Example \ref{Fourth Ex}, and did not overweight the valuation of any subset for $p'$ in Example \ref{Fifth Ex}, is that for all these subsets, there existed a superset with the same probability of satisfaction: in such cases, the weighting formula of $\hat{m}$ equals that of the Möbius performance measure.}
\end{example}

\color{black}
\medskip

The way to measure the performance of a rule was one of the two natural questions raised when considering the collection of the degrees to which it satisfies all combinations of axioms. The second one pertains to measuring the (in)compatibility of axioms, given such a collection.

\section{Where does the incompatibility come from?}\label{axiomscomp}

A collection of probabilities associated with a specific rule may be used in order to analyse how compatible the considered principles are under this rule. Indeed, such a collection indicates how likely the rule is to satisfy a certain axiom, given that it satisfies any subset of other axioms, which enables to better understand its behaviour. 

Under this interpretation, as in the previous section, a collection $p \in P$ is associated with a single rule. We see however two additional ways to interpret such a collection. First, $p$ can be obtained as a  ``summary collection'' of multiple collections associated with different rules. As a simple illustration, $p$ may indicate the probability that \textit{all the rules} belonging to a given family satisfy the combinations of axioms in $A$.

Furthermore, in the case of several punctual axioms, one could inquire about the \textit{existence} of outcomes having all the required properties, rather than about the \textit{selection}, by a specific rule, of such an outcome. Using the notation, and following the reasoning, of Section \ref{degree}, this amounts to considering, for any non-empty $S \subseteq A$, $D(S)=\{i \in I \text{ such that } \bigcap_{a \in S} O_i^a \neq \emptyset\}$. Then, considering a $\sigma-$algebra $\xi$ on $I$, such that, for all $\emptyset \neq S \subseteq A$, $D(S)$ is measurable, and $\mu$ a probability measure on $(I,\xi)$, we define $p_S=\mu(D(S))$. The set of consistent collections of probabilities can then be described as in Section \ref{domainproba}.\footnote{Under the previous definition, in the appendix (Section \ref{Set of collections}), any $\emptyset \neq S \subseteq A$ is associated with a logical sentence whose truth value stands for the satisfaction, by a given rule, of all the axioms in $S$. Now, for each $\emptyset \neq S \subseteq A$, take the logical sentence $\Tilde{S}$ whose truth value stands for the non-emptyness of $\bigcap_{a\in S} O_i^a$; the analysis of Section \ref{Set of collections} goes through.} For instance, given a class of object allocation problems, a collection may represent the probability that Pareto efficient and envy-free allocations exist.

\textit{We address the question of how axioms interact, given a collection, without favouring any of these interpretations.} 

A first step towards the answer consists in building a measure of how compatible an axiom $a \in A$ is with the other axioms under $p$, and this can be done by determining how $a$ contributes to the \textbf{overall degree of incompatibility} $1-p_A$, compared to the other axioms. This question is akin to the general purpose of cooperative game theory where one tries to determine ways to allocate the benefits or costs of cooperation/interaction among a given set of agents.\footnote{Related to our approach is the literature focusing on the use of tools of cooperative game theory in feature attribution problems \big(\cite{lunlee2017}\big).}

In order to formalise this connection, define $$P^*=\bigg\{ \big(1,(p_S)_{\emptyset \neq S \subseteq A} \big), (p_S)_{\emptyset \neq S \subseteq A} \in P \bigg\}.$$

Each element of $P$ is associated with one and only one element of $P^*$; hence a generic element of $P^*$ will also be denoted by $p$ to alleviate notation, and we will sometimes write $p$ as $(p_S)_{S\subseteq A}$.\footnote{In words, each element of $P^*$ is obtained by choosing a unique consistent collection defined for all non-empty combination of axioms, and by adjoining to it the value $1$ for the empty-set, which should be though of as being always satisfied.} The notation $\mathbf{0}$ will now represent the null vector in $\mathbb{R}^{2^J}$.

Define $V=1-P^*=\{1-p, \: p \in P^*\}$, and let $v \in V$. Note that $v_{\emptyset}=1-p_{\emptyset}=0$. Thus, \textit{$v$ is a cooperative game associated with the set of axioms} $A$. The number $v_A=1-p_A$ represents the overall degree of violation of axioms in $A$ by a rule associated with $p$, and this magnitude must be distributed among them. 

All definitions below could be equivalently formulated as a requirement on $V$ or on $P^*$. In order to be consistent with the previous sections, we choose to write definitions on $P^*$.

An \textbf{incompatibility measure} is a mapping:
$$
    \psi: \: p \in P^* \mapsto \big(\psi_a(p)\big)_{a\in A} \in \mathbb{R}^J.$$

For all $a \in A$, $\psi_a$ gives a measure of the incompatibility of $a$ with the axioms in $A \setminus a$. 

The connection with cooperative game theory draws attention to specific incompatibility measures. For example, the \textbf{Banzhaf incompatibility measure} is defined by:

$$ 
\phi_a(p)=\sum_{S \subseteq A \setminus a} \frac{1}{2^{J-1}}(p_S-p_{S\cup a}),  \text{ for all } p \in P^*, \text{ and all } a \in A.$$

The \textbf{Shapley incompatibility measure} is defined by:

$$ \varphi_a(p)=\sum_{S \subseteq A \setminus a} \frac{|S| !(J-|S|-1) !}{J !}(p_S-p_{S\cup a}), \text{ for all } p \in P^*, \text{ and all } a \in A.$$

The Banzhaf and the Shapley measure differ in the weight associated with $p_S-p_{S\cup a}$. Let us briefly recall the standard interpretation of this weight for both measures. The Banzhaf measure for axiom $a$ weights the value $p_S-p_{S\cup a}$ by the probability that combination $S$ form in a scenario where all combinations are equally likely to form. The Shapley measure of axiom $a$ is constructed by weighting the value $p_S-p_{S\cup a}$ by the probability that the axioms in $S$, and only the axioms in $S$, arrive before $a$ in a scenario where axioms arrive one by one, according to a uniformly random permutation of $A$. 

For $a \in A$, and  $S \subseteq A \setminus a$, the difference $p_S-p_{S\cup a}$ interprets as the cost in probability of satisfaction that $a$ exerts on $S$. As a consequence, it is natural to require that the incompatibility measure associated with $a$ be a function of the cost exerted by $a$ on all $S \subseteq A\setminus a$:\\
\bigskip

\textbf{\setword{Same cost---same incompatibility}{Incomp}}\medskip

Let $p,p' \in {P}^*$ and $a \in {A}$ be such that, for all $S \subseteq A \setminus a$, $p_S-p_{S\cup a}=p'_S-p'_{S\cup a}$.

Then,  
$$ \psi_a(p)=\psi_a(p').$$
\medskip

This principle corresponds to the invariance property implied by Young's \textit{strong monotonicity} axiom in his classical characterisation of the Shapley value on the subspace of games associated with a fixed group of players (\cite{young1985monotonic}). The Banzhaf and the Shapley incompatibility measures satisfy \textit{same cost---same incompatibility}. According to both measures, the greater the value, the greater the incompatibility.

Two classical properties are necessary for $\psi$ to be a relevant measure in the problem we consider. The first one makes it possible to interpret $\psi$ as allocating the incompatibility $1-p_A$ among axioms in $A$.\\ 
\bigskip

\textbf{\setword{Allocation of incompatibility}{Allo}} 

Let $p \in P^*$; $$\sum_{a\in A}\psi_a(p)=1-p_A.$$\medskip 

This condition corresponds to the standard efficiency principle in cooperative game theory.

The last requirement states that the evaluation should not be biased towards any axiom:\\ \bigskip

\textbf{\setword{Anonymity}{Ano}} 

Let $p \in P$ and $\pi: A \to A$, a permutation. 

Let $p^{\pi}$ the collection defined by $p^{\pi}_S=p_{(\pi(a))_{a\in S}}$. 

Let $a \in A$. Then,
$$\psi_a(p)=\psi_{\pi(a)}\big(p^{\pi}\big).$$
\medskip

These three principles single out the Shapley incompatibility measure:\medskip

\begin{theorem}\label{theorem2} An incompatibility measure $\psi: P^* \to \mathbb{R}^J$ satisfies

\begin{itemize}
    \item \ref{Incomp},
    \item \ref{Allo}, and
    \item \ref{Ano}
\end{itemize} 
if and only if it coincides with the Shapley incompatibility measure: for all $a \in A$,
\begin{align*}
    \psi_a: P^* &\to \mathbb{R}^J\\
    p &\mapsto \sum_{S \subseteq A \setminus a} \frac{|S| !(J-|S|-1) !}{J !}(p_S-p_{S\cup a}).
\end{align*}  
\end{theorem}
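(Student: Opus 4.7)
The strategy is to adapt Young's (1985) monotonicity-based characterisation of the Shapley value to the restricted domain $P^*$. For the easy direction, observe that $\varphi$ satisfies \ref{Allo} because it coincides with the Shapley value of the game $v = 1 - p$ whose grand-coalition value equals $1 - p_A$; that it satisfies \ref{Ano} by the standard symmetry property of the Shapley value; and that it satisfies \ref{Incomp} since, by definition, $\varphi_a(p)$ is a linear combination of the marginal costs $\big(p_S - p_{S \cup a}\big)_{S \subseteq A \setminus a}$.

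For uniqueness, let $\psi$ satisfy the three axioms. By Lemma \ref{ConvHull1}, every $p \in P^*$ admits a unique decomposition as a convex combination of the vertices of $P^*$, which I write as $p = \alpha_0 q^0 + \sum_{\emptyset \neq T \subseteq A} \alpha_T q^T$, where $q^0$ is the vertex obtained from $\mathbf 0 \in P$ and each $q^T$ is obtained from $p^{1,T}$. Define the complexity $\kappa(p) = \big|\{T : \emptyset \neq T \subsetneq A,\ \alpha_T > 0\}\big| + \mathbf{1}[\alpha_0 > 0]$ and proceed by induction on $\kappa(p)$. In the base case $\kappa(p) = 0$ one has $p = q^A$ (so $p_S = 1$ for every $S$), and \ref{Ano} together with \ref{Allo} forces $\psi_a(p) = 0 = \varphi_a(p)$ for each $a$.

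For the inductive step, fix $a \in A$. Case A: there exists $\emptyset \neq T \subsetneq A$ with $a \in T$ and $\alpha_T > 0$. Define $p' = p - \alpha_T q^T + \alpha_T q^A$; then $p' \in P^*$ and $\kappa(p') = \kappa(p) - 1$. Since $a \in T$ (respectively $a \in A$), a direct computation shows that $q^T$ (resp.\ $q^A$) has zero marginal cost at $a$, so $p'_S - p'_{S \cup a} = p_S - p_{S \cup a}$ for every $S \subseteq A \setminus a$. Applying \ref{Incomp} to both $\psi$ and $\varphi$, and invoking the inductive hypothesis at $p'$, yields $\psi_a(p) = \psi_a(p') = \varphi_a(p') = \varphi_a(p)$.

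Case B collects the set $B$ of players for which Case A fails: for $a \in B$, every $\emptyset \neq T \subsetneq A$ with $\alpha_T > 0$ is disjoint from $\{a\}$, so every such $T$ is disjoint from $B$, and consequently $p$ is invariant under any permutation of $A$ fixing $A \setminus B$ pointwise. By \ref{Ano}, $\psi_a(p)$ is constant on $B$, and the same holds for $\varphi_a(p)$. Combining \ref{Allo} (which gives $\sum_a \psi_a(p) = 1 - p_A = \sum_a \varphi_a(p)$) with the Case-A conclusion $\psi_a(p) = \varphi_a(p)$ for $a \notin B$, the two common values on $B$ must coincide, completing the induction. The main obstacle this argument sidesteps is that subtracting a unanimity game in Young's original style can leave the admissible domain $V = 1 - P^*$; the simplex structure of $P^*$ from Lemma \ref{ConvHull1}, together with the observation that $q^A$ coincides with the zero game, provides a Shapley-value-invariant reduction that stays inside $P^*$.
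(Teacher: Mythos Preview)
Your argument is correct and follows essentially the same route as the paper's proof: both exploit the simplex structure of $P^*$ (Lemma~\ref{ConvHull1}) to induct on the size of the support of the vertex decomposition, use \ref{Incomp} together with a reduction that shifts mass from a vertex $q^T$ with $a\in T$ to the ``null'' vertex $q^A$ (equivalently, the paper's $\hat v^A=\mathbf 0$) to handle axioms lying in some active $T$, and finish the remaining axioms by the symmetry of $p$ on their set combined with \ref{Ano} and \ref{Allo}. The only cosmetic differences are that you work directly in $P^*$ rather than in $V=1-P^*$, you exclude $\alpha_A$ from the complexity count (so the base case is the single point $q^A$), and you peel off one vertex at a time whereas the paper removes all vertices containing $a$ simultaneously; none of this changes the substance.
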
\bigskip

\textbf{Remark:} The reader will not be surprised that the Shapley incompatibility measure satisfies these three properties, given Young's axiomatisation of the Shapley value (\cite{young1985monotonic}) on the entire set of games $G=\{u=(u_S)_{S\subseteq A} \in \mathbb{R}^{2^J}, u_{\emptyset}=0\}$; however the fact that it is actually characterised by them is not immediate. Indeed, $P^*$ is defined by specific consistency conditions, and it turns out that the family of ``unanimity games'', used in his proof, does not belong to $V=1-P^*$. 

\begin{example}\label{Seventh Ex} Let $A=\{a_1,a_2,a_3\}$ and consider the ``unanimity game'' $u^{a_1a_3}$ defined by $u^{a_1a_3}_T=1$ if $a_1a_3 \subseteq T$, $u^{a_1a_3}_T=0$ otherwise. Then $p=1-u^{a_1a_3}$ is such that $p_{a_1}=p_{a_3}=1$ but $p_{a_1a_3}=0$, that is $p\notin P^*$, and thus $u^{a_1a_3} \notin V$. 
\end{example}

However, we build on the fact that each $v \in V$ can be written as a convex combination of points in $V$ ---the extreme points of $V$ are identified thanks to the proof of Lemma \ref{ConvHull1}--- and build on an induction argument that is analogous to the one that Young proposes.\footnote{The result holds if one replaces \textit{anonymity} by the following weaker well-known ``equal treatment" requirement: let $p \in P$, $a,a' \in A$ be such that, for all $S \subseteq A\setminus\{a,a'\}$, $p_S-p_{a \cup S}=p_S-p_{a' \cup S}$, then $\psi_a(p)=\psi_{a'}(p)$.}\medskip

The Shapley measure is also characterised   by \textit{allocation of incompatibility}, \textit{anonymity}, and direct adaptations of the classical ``null-player'' and ``additivity and positive homogeneity'' axioms ---the reader may find their explicit definitions in the appendix (see Theorem \ref{theorem3}). However, \textit{same cost---same incompatibility}, \textit{allocation of incompatibility} and \textit{anonymity} are in our view the very axioms that support the interpretation of $\psi$ as an adequate measure.

The following proposition draws a connection between the Shapley incompatibility measure and the Möbius transform that we used to build a measure of the performance of a rule.\bigskip

\begin{proposition}\label{Selectope}
    Let $a \in A$ and $p \in P^*$. Then, $$\varphi_a(p)=\sum_{S \subseteq A: a \notin S} \frac{\sum_{T: S \subseteq T} (-1)^{|T\setminus S|}p_T}{|A\setminus S |}.$$
\end{proposition}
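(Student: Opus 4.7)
The plan is to establish the identity coefficient-by-coefficient in $p$. Both sides are linear in the variables $(p_R)_{R \subseteq A}$, so it suffices to show equality of the coefficient of $p_R$ for each $R \subseteq A$.

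On the left-hand side, the coefficient of $p_R$ is read off directly from the definition of $\varphi_a$: the term $p_R$ arises either from the $p_S$ summand with $S = R$ (when $a \notin R$) or from the $p_{S \cup a}$ summand with $S = R \setminus a$ (when $a \in R$), yielding $+\frac{|R|!(J-|R|-1)!}{J!}$ and $-\frac{(|R|-1)!(J-|R|)!}{J!}$ respectively.

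On the right-hand side, I would swap the order of summation to obtain $\sum_{R \subseteq A} p_R \sum_{S \subseteq R,\, a \notin S}\frac{(-1)^{|R|-|S|}}{J-|S|}$. When $a \notin R$, the inner sum runs over all $S \subseteq R$; when $a \in R$, it runs over $S \subseteq R \setminus a$ only. In both cases it reduces to a sum of the form $\sum_{k=0}^{r}\binom{r}{k}\frac{(-1)^{r-k}}{J-k}$, with $r = |R|$ in the first case and $r = |R|-1$ (together with an overall factor of $-1$) in the second.

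The single technical step is the beta-integral identity
$$\sum_{k=0}^{r}\binom{r}{k}\frac{(-1)^{r-k}}{J-k} \;=\; \int_0^1 x^{J-r-1}(1-x)^r\,dx \;=\; \frac{r!(J-r-1)!}{J!},$$
which I would derive by writing $\frac{1}{J-k} = \int_0^1 x^{J-k-1}\,dx$, swapping sum and integral, and collapsing the binomial expansion of $(1-x)^r$. Plugging in the two relevant values of $r$ matches the LHS coefficients exactly. I do not expect any serious obstacle here; conceptually, the statement is a selectope-type representation of the Shapley value, in which the Möbius inverse $\sum_{T \supseteq S}(-1)^{|T\setminus S|}p_T = \alpha^p_S$ from Section \ref{charac}---the probability that the rule satisfies exactly the axioms in $S$---is distributed equally among the $|A \setminus S|$ axioms deemed ``responsible'' for the failure.
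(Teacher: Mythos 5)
Your proof is correct, but it follows a genuinely different route from the paper. You verify the identity directly as an equality of linear forms in $(p_R)_{R\subseteq A}$: expanding both sides, matching the coefficient of each $p_R$, and evaluating the resulting alternating sum $\sum_{k=0}^{r}\binom{r}{k}\frac{(-1)^{r-k}}{J-k}=\frac{r!\,(J-r-1)!}{J!}$ via the beta integral (note, as you implicitly use, that $r\leq J-1$ in both cases since $a\notin S$, so the integral converges). The paper instead exploits the structure it has already built: by Lemma \ref{ConvHull1} every $p\in P^*$ decomposes uniquely as a convex combination $\sum_{S}\alpha^{*p}_S\,\hat{p}^{1,S}$ of the extreme points of $P^*$, the coefficients $\alpha^{*p}_S$ are identified as the Möbius transform $\sum_{T\supseteq S}(-1)^{|T\setminus S|}p_T$, and linearity of the Shapley value together with the computation $\Tilde{\varphi}_a(\hat{v}^S)=\frac{1}{|A\setminus S|}$ for $a\notin S$ (and $0$ otherwise) yields the formula. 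Your argument is more elementary and self-contained, and in fact proves the identity for every $p\in\mathbb{R}^{2^J}$ with no reference to $P^*$; what it does not deliver is the interpretive content the paper's proof makes transparent, namely that $\alpha^{*p}_S$ is the probability of satisfying exactly the axioms in $S$ and that the Shapley measure splits this quantity equally among the axioms in $A\setminus S$ --- a reading you correctly state at the end, but which in your derivation is an observation rather than the engine of the proof.
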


The numerator of the summand is given by the Möbius transform of the set function $p: 2^A \to [0,1]$ for the following partial order $\geq$ on $2^A$: $$\text{for all } S,T, \: \: S \geq T \iff S \subseteq T.$$

The only difference with the previous section, due to the fact that we consider $p$ in $P^*$ and not in $P$, is that this transform must be defined for the empty set. In line with the previous section, let 
\begin{align*}
    \alpha^*: P^* &\to \mathbb{R}^{2^J}\\
    (p_S)_{S\subseteq A} &\mapsto \bigg(\sum_{T: S \subseteq T} (-1)^{|T\setminus S|}p_T\bigg)_{S \subseteq A}, 
\end{align*}
and $\alpha^{*p}$ denote $\alpha^*(p)$. That is, for all $\emptyset \neq S \subseteq A$, $\alpha^{*p}_S=\alpha^p_S$, and $\alpha^{*p}_{\emptyset}=1-\sum_{\emptyset \neq S \subseteq A}\alpha^p_S$.

For $S \subseteq A$ and $p \in P^*$, the value $\alpha^{*p}_S$ gives the probability of satisfying all the axioms in $S$ and only the axioms in $S$. Thus, from the point of view of an incompatibility measure, $\alpha^{p^*}_S$ represents what one loses by considering the other axioms. Then, the Shapley measure allocates this loss equally between these other axioms: each $a \in A \setminus S$ is allocated an incompatibility value of $\frac{\alpha^{*p}}{|A\setminus S|}$: for $a' \in A$, $\varphi_{a'}(p)=\sum_{S\subseteq A: a' \notin S} \frac{\alpha^{*p}}{|A\setminus S|}$.\footnote{A similar formula is known for the Shapley value on games. For all $u \in G$, all $a \in A$, the Shapley value associated with $a$ for game $u$ is given by $\sum_{S \subseteq A: a \in S}\frac{\sum_{T\subseteq S}(-1)^{|S\setminus T|}u_T}{|S|}$ (see \cite{grabisch2016set}, Chapter 3).}

\section{Discussion: robustness with respect to the collection of probabilities}\label{discuss}

One of the motivations for the use of a notion of degree defined as a probability of satisfaction was to extend a natural partial order between rules. According to it, a rule $f:I \to O$ performs better than a rule $f': I \to O$, when the set of axioms under consideration is $A$, if and only if, for all non-empty $S \subseteq A$, 
\begin{align}\label{inclusion}
\begin{split}
\: \: &\bigg\{ \big(i_1,...,i_{K^{A}}\big) \text{ such that } \big(i_1,...,i_{K^a}\big) \in D^{f'}(a) \text{ for all } a \in S \bigg\}\\
&\subseteq\\
&\bigg\{ \big(i_1,...,i_{K^{A}}\big) \text{ such that } \big(i_1,...,i_{K^a}\big) \in D^f(a) \text{ for all } a \in S \bigg\},
\end{split}
\end{align}
that is, for all possible combinations $S$, the set of (tuples of) instances for which $f'$ satisfies all the axioms in $S$ is included in the set of (tuples of) instances for which $f$ does. Allowing combinations of axioms to matter to a different extent in the evaluation of rules through the use of set functions that are monotonic with respect to inclusion, we have characterised a performance measure $\ddot m: U \times P \to \mathbb{R}_+$ inducing a \textit{completion} of this partial order. 

Fix $u \in U$ throughout this section, if Condition (\ref{inclusion}) holds for $f$ and $f'$, then, \textit{for any probability measure on} $(I^{K^A},\xi^{K^A})$ \textit{on the basis of which} $p^f$ \textit{and} $p^{f'}$ \textit{are computed}, $\ddot m(u,p^f) \geq \ddot m(u,p^{f'})$. However, when Condition (\ref{inclusion}) does not hold, there may of course be probability measures on $(I^{K^A},\xi^{K^A})$ for which the induced $p^f$ and $p^{f'}$ yield $\ddot m(u,p^f) \geq \ddot m(u,p^{f'})$ and others for which the induced $p^f$ and $p^{f'}$ yield $\ddot m(u,p^f) \leq \ddot m(u,p^{f'})$. This is a challenge when several such probability measures must be integrated into the analysis ---in particular when there is ambiguity on the measure that should be used to compute collections of probabilities. 

Two ways of dealing with such a variability stand out. For exposition purposes, we will describe them under the assumption that only finite sets of probability measures on $(I^{K^A},\xi^{K^A})$ are considered. A finite set of such probability measures is generically denoted by $\Delta=\{\mu_1,...,\mu_K\}$. Then, given such a set, a rule $f$ is associated with a finite set of collections of probabilities $\Pi^f_{\Delta}=\{p_1^f,...,p_K^f\}$.\footnote{One has $p^f_k=\bigg(\mu_k\bigg(\bigg\{ \big(i_1,...,i_{K^{A}}\big) \text{ such that } \big(i_1,...,i_{K^a}\big) \in D^f(a) \text{ for all } a \in S \bigg\}\bigg)\bigg)_{\emptyset \neq S \subseteq A}$, for all $k=1,...,K$.} The set of finite sets of collections is denoted by $\mathcal{P}$, and a generic element of it is denoted by $\Pi=\{p_1,...,p_H\}$.

The first way consists in \textit{selecting} a collection $p \in P$ supposed to represent or summarise the collections under consideration. Given a mapping $\kappa : \mathcal{P} \to P$, and given $\Delta$, one concludes that the rule $f$ performs better than the rule $f'$ if and only if $\ddot m\left(u,\kappa(\Pi^f_{\Delta})\right) \geq \ddot m\left(u,\kappa(\Pi^{f'}_{\Delta})\right)$. This order is always complete and $\kappa$ should be such that it extends the partial order defined by Condition (\ref{inclusion}). It is for instance the case with the following mapping: $$\kappa: \Pi=\{p_1,...,p_H\} \mapsto \sum_{h=1,...,H} \beta_h^{\Pi} p_h,$$ with $\beta_h^{\Pi}\geq0$ and $\sum_{h=1,...,H}\beta_h^{\Pi}=1.$

The other method consists in comparing two rules $f$ and $f'$, given $\Delta$, on the basis of the sets of values taken by the performance measure as $p^f$ and $p^{f'}$ vary in $\Pi^f_{\Delta}$ and $\Pi^{f'}_{\Delta}$, respectively. In that perspective, let us give several (not necessarily complete) standard criteria (see \cite{fishburn1985interval}, \cite{bewley2002knightian}, \cite{echenique2022twofold}, \cite{bardier2024unanimity}).\footnote{\cite{bewley2002knightian}, \cite{echenique2022twofold} and \cite{bardier2024unanimity} study incomplete criteria of decision making under ambiguity, and each of the three last criteria we describe in this section correspond to one proposed in one of these papers.}

$\boldsymbol{\alpha}$\textbf{-maxmin criterion:} Let $\alpha \in [0,1]$.
The rule $f$ performs better than the rule $f'$ if and only if:
\begin{align*}
\begin{split}
\: \: &\alpha \bigg(\max_{k=1,...,K} \ddot m(u,p^f_k)\bigg)+ (1-\alpha) \bigg(\min_{k=1,...,K} \ddot m(u,p^f_k)\bigg)\\
&\geq\\
&\alpha \bigg(\max_{k=1,...,K} \ddot m(u,p^{f'}_k)\bigg)+ (1-\alpha) \bigg(\min_{k=1,...,K} \ddot m(u,p^{f'}_k)\bigg).
\end{split}
\end{align*}
This criterion is complete and consistent with the partial order defined by Condition (\ref{inclusion}) for any value of $\alpha \in [0,1]$. When $\alpha=0$, the criterion focuses on a worst-case analysis, and, when $\alpha=1$, on a best-case analysis. 

All the criteria described so far induce a complete order between rules, but it is arguable that for comparisons to be robust, when facing different possible probability measures, one must account for the possibility that there be no ``sufficient evidence'' for two rules to be compared. The following criteria capture this idea.

\textbf{Max-and-min criterion:} The rule $f$ performs better than the rule $f'$ if and only if: 
\begin{align*}
    \begin{cases}
  \max_{k=1,...,K} \ddot m(u,p^f_k)\geq \max_{k=1,...,K} \ddot m(u,p^{f'}_k)\\
  \min_{k=1,...,K} \ddot m(u,p^f_k)\geq \min_{k=1,...,K} \ddot m(u,p^{f'}_k)
\end{cases}.
\end{align*}

\textbf{Point-wise criterion:} The rule $f$ performs better than the rule $f'$ if and only if: $$\text{ for all } k=1,...,K, \: \ddot m(u, p^{f}_k) \geq m(u, p^{f'}_k).$$

\textbf{Min-vs-max criterion:} The rule $f$ performs better than the rule $f'$ if and only if: 
$$\min_{k=1,...,K} \ddot m(u,p^f_k) \geq \max_{k=1,...,K} \ddot m(u,p^{f'}_k).$$

The \textit{max-and-min criterion} and the \textit{point-wise criterion} are consistent with the partial order defined by Condition (\ref{inclusion}). While extending this partial order was one of the main motivations of our approach, the \textit{min-vs-max criterion} is not consistent with it. More generally, there is a classical trade-off for these criteria between their degree of incompleteness and the conviction one can have in the comparisons they  express ---given $\Delta$, the \textit{max-and-min criterion} extends the \textit{point-wise criterion}, which extends the \textit{min-vs-max criterion}.\medskip

\textit{The primary insight from this section is that once we have identified, through our axiomatization, the measure to use for a single collection of probabilities, numerous standard methods, of which we simply  provided a few examples, can naturally be combined with the weighted Möbius performance measure in order to carry out a robust analysis.}

\section{Conclusion}\label{conc}

We defined a general notion for the degree to which a rule satisfies a set of axioms. Armed with it, we proposed and characterised (1) a unique criterion to evaluate the performance of rules, taking into account the normative desirability of axioms and, crucially, that of their combinations, and (2) a unique criterion to determine, for a given collection of probabilities of satisfaction, the role of each axiom in the overall degree of violation.

Let us further elaborate on the operational illustration we pointed out at the beginning of this work. A policy maker in charge of selecting a mechanism to match students with schools, must distinguish between (variants of) the ``Deferred Acceptance'' (DA), the ``Top Trading Cycle'' (TTC) and the ``Immediate Acceptance'' (IA) rules, on the basis of 
 ``strategy-proofness'' (for students), ``efficiency'' and ``stability'', which are incompatible, even in the standard case of strict preferences and priorities. She then asks a team of researchers to estimate, for each rule, how probable the satisfaction of each combination of principles is. The research team inquires about the relative importance of each combination of principles for the policy maker. Then, \textit{the weighted Möbius performance measure} adequately integrates these two pieces of information in order to measure and compare the performance of DA, TTC and IA. Each of these rules satisfies certain combinations of these properties with degree $1$. In addition, some constrained-optimality results, in the spirit of the partial order we mentioned throughout this work (see Condition (\ref{inclusion})), have been obtained ---we refer the reader to \cite{abdulkadirouglu2023school} for a review. However, the families of rules over which these results hold do not include all (variants of) the three rules, and thus do not provide a way to compare them.\footnote{For example, \cite{abdulkadirouglu2021priority} show that ``when each school has a single seat,
the top trading cycles algorithm has less priority violations than any Pareto efficient
and strategy-proof mechanisms.'' Neither DA or IA are among these rules.} For a fixed school choice environment, given an estimation of the preferences of students, and a valuation reflecting the importance of each combination of properties for the involved policy maker, one can compare these rules using our criterion.

In addition, given such an estimation of preferences, \textit{the Shapley incompatibility measure} enables to measure the compatibility of the three properties.

As we mentioned, certain notions of degree are defined to express the \textit{intensity} of the violation, rather than its \textit{plausibility}. A problem of comparability across axioms obviously emerges with such notions, and developing an analytical framework able to account for a certain \textit{partial commensurability}, inspired by the one we proposed here based on the \textit{complete commensurability} guaranteed by the use of probabilities, constitutes an important complementary research.

%% The Appendices part is started with the command \appendix;
%% appendix sections are then done as normal sections

\section{Appendix}
\subsection{The set of collections of probabilities}\label{Set of collections}

With each non-empty set of axioms $S \subseteq A$, we associate a unique \textbf{logical sentence}, that we denote by $\Tilde{S}$, whose truth value stands for the satisfaction of all the axioms in $S$. There are thus $2^{J}-1$ such logical sentences. The set of sentences is denoted by $\mathcal{S}$. It induces a set of \textbf{possible worlds}, denoted by $\mathcal{W}$, defined as the set of \textit{logically consistent collections of truth values} of all sentences. That is, $\mathcal{W}$ is made of the $2^J$ collections of truth values $W=(w_{\Tilde{S}})_{\Tilde{S}\in \mathcal{S}}$, where $w_{\Tilde{S}} \in \{T,F\}$ ---$T$ standing for \textit{True}, $F$ for \textit{False}---
such that, for all $\emptyset \neq S \subseteq A$,
$$w_{\Tilde{S}}=T \text{ if and only if, for all } \emptyset \neq S' \subset S,  w_{\Tilde{S'}}=T.$$ 

One can illustrate this construction with a table where rows are sentences and columns are possible worlds. 
\begin{example}\label{Third Ex}
Let $A=\{a_1,a_2,a_3\},$ the set possible worlds $\mathcal{W}$ is represented by:
\begin{comment}
\begin{equation*}
\begin{matrix}
& & \Tilde{a_1} & \Tilde{a_2} & \Tilde{a_3} & \Tilde{a_1a_2} & \Tilde{a_1a_3} & \Tilde{a_2a_3} & \Tilde{A}\\
& W_0 & F & F & F & F & F & F & F\\
& W_1 & T & F & F & F & F & F & F\\
& W_2 & T & T & F & T & F & F & F\\
& W_3 & T & F & T & F & T & F & F\\
& W_4 & T & T & T & T & T & T & T\\
& W_5 & F & T & F & F & F & F & F\\
& W_6 & F & T & T & F & F & T & F\\
& W_7 & F & F & T & F & F & F & F\\
\end{matrix}.
\end{equation*} 
\end{comment}

\begin{equation*}
\begin{matrix}
& & W_0 & W_1 & W_2 & W_3 & W_4 & W_5 & W_6 & W_8\\
& \Tilde{a_1} & F & T & T & T & T & F & F & F\\
& \Tilde{a_2} & F & F & T & F & T & T & T & F\\
& \Tilde{a_3} & F & F & F & T & T & F & T & T\\
& \Tilde{a_1a_2} & F & F & T & F & T & F & F & F\\
& \Tilde{a_1a_3} & F & F & F & T & T & F & F & F\\
& \Tilde{a_2a_3} & F & F & F & F & T & F & T & F\\
& \Tilde{A} & F & F & F & F & T & F & F & F\\
\end{matrix}.
\end{equation*}
\end{example}

Consider a collection $p=(p_S)_{\emptyset \neq S \subseteq A}$, with which we associate the collection $\Tilde{p}:\: \Tilde{S} \in \mathcal{S} \mapsto p_S \in [0,1]$, where $S$ is the non-empty subset of $A$ to which the logical sentence $\Tilde{S}$ corresponds in the construction above. In words, $p$ is consistent if there exists a probability measure $\pi$ on the set of possible worlds such that the value, according to $\Tilde{p}$, associated with any sentence, is equal to the sum of the values, according to $\pi$, associated with the possible worlds where the sentence has truth value $T$. 

Mathematically, take an arbitrary permutation of possible worlds and an arbitrary permutation of sentences, and define the incidence matrix $H$, of size $(2^{J}-1)\times 2^{J}$, by $H_{ij}=1$ if sentence $\Tilde{S}_i$ has truth value $T$ in world $W_j$, 0 otherwise. Let $\Delta^{\mathcal{W}}=\{\pi=(\pi_j)_{j=1,...,2^J}, 0 \leq \pi_j \leq 1, \sum_j \pi_j =1\}$.

We can now define the set of possible collections of probabilities, $P$. The collection $p=(p_S)_{\emptyset \neq S \subseteq A}$ belongs to $P$ if and only if the associated $\Tilde{p}$ is such that there exists $\pi \in \Delta^{\mathcal{W}}$ such that: $$\Tilde{p}=H\cdot \pi,$$
where $(\: \cdot\: )$ denotes the usual scalar product. 

The set $P$ is thus defined as the set of collections for which a specific linear system admits a solution. It remains to characterise the family of extreme points of $P$ (Lemma \ref{ConvHull1}):

$$P \text{ is the closed convex hull of } \mathbf{0} \text{ and } (p^{1,S})_{\emptyset \neq S \subseteq A};$$    
we prove it now.

\subsubsection*{\textbf{Proof of Lemma \ref{ConvHull1}}}

        The set of vectors $\Tilde{p} \in [0,1]^{2^J-1}$ for which there exists a solution in $\Delta^{\mathcal{W}}$ to the \textit{linear} system: $$\Tilde{p}=H\cdot \pi$$ is convex and closed. 

        Hence, $P$ is a convex compact subset of $[0,1]^{2^J-1}$, and, as such, it is the closed convex hull of its extreme points. It thus remains to prove that the extreme points of $P$ are $\mathbf{0}$ and $(p^{1,S})_{\emptyset \neq S \subseteq A}$.

        It is clear that these points are in $P$, and that they are extreme. These collections %---in addition to $\hat{v}^A$ which is not extreme obviously, as it is the null vector---
    are the only $\{0,1\}-$valued collections in $P$. Indeed, let $p$ be a $\{0,1\}-$valued collection that does not coincide with the null vector, or any $p^{1,S}$, $\emptyset \neq S \subseteq A$. Then, there exist $T,T'\subseteq A$ such that $\emptyset \neq T\subset T'$ and $p_{T'}=1$ and $p_T=0$; that is, $p \notin P$.

    Let $p \in P$ such that there is $ \emptyset \neq \Tilde{T}\subseteq A$ such that $0<p_{\Tilde{T}}<1$. Then there is $\epsilon >0$ such that the collections $\underline{p},\overline{p}$, defined by 
\begin{align*}
    &\underline{p}_T=p_T-\epsilon \text{ if } 0<p_T<1\\
    &\underline{p}_T=1 \text{ if } p_T=1\\
    &\underline{p}_T=0 \text{ if } p_T=0
\end{align*}
and 
\begin{align*}
    &\overline{p}_T=p_T+\epsilon \text{ if } 0<p_T<1\\
    &\overline{p}_T=1 \text{ if } p_T=1\\
    &\overline{p}_T=0 \text{ if } p_T=0
\end{align*}
are in $P$. In addition, $p=\frac{1}{2}\underline{p}+\frac{1}{2}\overline{p}$, that is, $p$ is not extreme. We have thus characterised the set of extreme points of $P$.\medskip

By the linear independance of the family $(p^{1,S})_{\emptyset \neq S \subseteq A}$, we have proved:\medskip

\textit{For all $p \in P\setminus \{\mathbf{0}\}$, there exists a} unique \textit{pair $\big(\mathcal{I}^p, (\alpha^p_T)_{\emptyset \neq T \subseteq A}\big)$,  where $\mathcal{I}^p$ is a non-empty family of non-empty subsets of $A$, and $(\alpha^p_T)_{\emptyset \neq T \subseteq A}$ a family of real numbers, such that:}

\begin{itemize}
    \item $0< \alpha^p_T  \leq 1$ \textit{for all} $T \in \mathcal{I}^p$ \textit{and} $\sum_{T \in \mathcal{I}^p} \alpha^p_T \leq 1$,
    \item $\alpha^p_T=0$ \textit{for all} $ \emptyset \neq T \in 2^A\setminus \mathcal{I}^p,$ \textit{and}
\end{itemize} 

$$p=\sum_{T \in \mathcal{I}^p} \alpha^p_T p^{1,T} \: \: \: \: \: \bigg(=\sum_{\emptyset \neq T \subseteq A} \alpha^p_T p^{1,T}+(1-\sum_{T \in \mathcal{I}^p} \alpha^p_T)\mathbf{0}\bigg).$$

We write the trivial equality in parenthesis above in order to stress that, in general, the sum $\sum_{T \in \mathcal{I}^p} \alpha^p_T$ is not one.

For the collection $\mathbf{0}$, we allow for the associated family of subsets to be empty and write $\mathcal{I}^{\mathbf{0}}=\emptyset$.

\subsection{Proof of Theorem \ref{theorem1}}

\begin{onlyifpart}
Let $m: U \times P \to \mathbb{R}_+$ be a \textit{continuous} measure on $U$ satisfying \textit{same contribution---same impact} and \textit{expected valuation for single-axiom-reducible problems}. For all $p \in P$, let $m^p: u \in U \mapsto m(u,p) \in \mathbb{R}_+$.\medskip

Let $u \in U_{st}$ and $\emptyset \neq S \subseteq A$. Consider $u^{S,\epsilon} \in U_{st}$ defined by $u^{S,\epsilon}_T=u_T$ if $T \neq S$ and $u^{S,\epsilon}_S=u_S+\epsilon$, for some $\epsilon>0$. There exists such a $u^{S,\epsilon}$ in $U_{st}$ because $u$ lies in $U_{st}$. 

Fix $p \in P$. For all $p' \in P$ such that $\alpha^p_S=\alpha^{p'}_S$, by \textit{same contribution---same impact}, 
    \begin{align*}
&m^p\bigg(\big(u_{a_1},\ldots,u_S+\epsilon,\ldots, u_A\big)\bigg)-m^p\bigg(\big(u_{a_1},\ldots,u_S,\ldots, u_A\big)\bigg)\\
=\: &m^{p'}\bigg(\big(u_{a_1},\ldots,u_S+\epsilon,\ldots, u_A\big)\bigg)-m^{p'}\bigg(\big(u_{a_1},\ldots,u_S,\ldots, u_A\big)\bigg).
    \end{align*}
    
    Yet, by definition of a single-axiom-reducible problem, for $\lambda=\alpha^p_S \in [0,1]$, 
$\alpha^{p^{\lambda,S}}_S=\lambda=\alpha^p_S$. Hence,
\begin{align*}
&m^p\bigg(\big(u_{a_1},\ldots,u_S+\epsilon,\ldots, u_A\big)\bigg)-m^p\bigg(\big(u_{a_1},\ldots,u_S,\ldots, u_A\big)\bigg)\\
=\: &m^{p^{\lambda,S}}\bigg(\big(u_{a_1},\ldots,u_S+\epsilon,\ldots, u_A\big)\bigg)-m^{p^{\lambda,S}}\bigg(\big(u_{a_1},\ldots,u_S,\ldots, u_A\big)\bigg)\\
=\: & \lambda(u_S+\epsilon)-\lambda u_S,
\end{align*} where the last equality follows from \textit{expected valuation for single-axiom-reducible problems}.\medskip

As a consequence, for all $u \in U_{st}$, all $p \in P$, and all non-empty $S \subseteq A$, $$\frac{m^{p}\bigg(\big(u_{a_1},\ldots,u_S+\epsilon,\ldots, u_A\big)\bigg)-m^{p}\bigg(\big(u_{a_1},\ldots,u_S,\ldots, u_A\big)\bigg)}{\epsilon}=\alpha^p_S.$$

By the convexity of $U_{st}$, the fact that $u^{S,\epsilon}$ lies in $U_{st}$ implies that for all $0<\gamma<\epsilon$, $u^{S,\gamma} \in U_{st}$.

One can thus let $\epsilon$ tend to $0$ in the $2^J-1$ equalities above, and this yields, as $U_{st}$ is open, that for all $p \in P$, $m^p$ is (continuously) differentiable on $U_{st}$ and its gradient vector is \textit{constant} and equal to $(\alpha^p_S)_{\emptyset \neq S \subseteq A}$. As $U_{st}$ is connected, we have proved that for all $p \in P$, there exists $b^p \in \mathbb{R}$ such that, for all $u \in U_{st}$: $$m^p(u)=\sum_{\emptyset\neq S \subseteq A}\: u_S\alpha^p_S + b^p.$$

As $m^p$ is continuous, and as $U$ is the closure of $U_{st}$, $m^p(u)=\sum_{\emptyset\neq S \subseteq A}\: u_S\alpha^p_S + b^p$ for all $u \in U$. We can now use $m(\mathbf{0},p)=0$ to conclude that for all $p \in P$, and $u \in U$

$$m(u,p)=\sum_{\emptyset \neq S \subseteq A} u_S \alpha_S^p.$$\medskip

It remains to give an explicit formula for $\alpha^p_S$. 

Recall that $\alpha^p_A=p_A$ for all $p \in P$.

For all $\emptyset \neq S \subset A$, and $p \in P$, the term $\alpha^p_S$ is defined recursively by 
\begin{align}
    &\: \: \alpha^p_S=p_S-\sum_{T: S \subset T} \alpha^p_T \notag \\
    \iff &\sum_{T: S \subseteq T} \alpha^p_T=p_S.\tag{$\ast$}
\end{align}\label{Mob}

Equation ($\ast$) corresponds to the formula defining the Möbius transform of $p$ associated with the partial order $\geq$ defined on $2^A\setminus \emptyset$, such that, $S \geq T \iff S \subseteq T$.\footnote{Again, we refer the reader to \cite{grabisch2016set} (Chapter 2) for more details on the definition of the Möbius transform of a set function, given an arbitrary partial order defined on a (finite) set.} The remaining of the proof is similar to the construction of the Möbius transform associated with a finite set partially ordered by inclusion in the standard way. 

By the classical result of Rota (\cite{rota1964foundations}), $(\alpha^p_S)_{\emptyset \neq S \subseteq A}$ satisfies Equation ($\ast$) for all non-empty $S \subseteq A$ if, and only if, there is a unique mapping $\nu: (2^{A}\setminus \emptyset) \times (2^{A}\setminus \emptyset) \to \mathbb{R}$ such that, 
\begin{itemize}
    \item $\nu(T,S)=1$ if $T=S$; $\nu(T,S)=-\sum_{S': S\subset S'\subseteq T}  \nu(T,S')$ if $S \subset T$, and $\nu(T,S)=0$ otherwise; and,
    \item for all non-empty $S \subseteq A$, $\alpha^p_S=\sum_{T: S\subseteq T} \nu(T,S) p_T.$
\end{itemize}

We prove by induction on the value of $|T\setminus S|$ that $\nu(T,S)=(-1)^{|T\setminus S|}$, for all $\emptyset \neq S,T$ with $S \subset T$.

If $|T\setminus S|=1$, then $\nu(T,S)=-\nu(T,T)=-1$. Assume the result holds for all $S',T'$ with $|T'\setminus S'|=k\in \mathbb{N}$ and let $S,T$ such that $|T\setminus S|=k+1$. Then,  
\begin{align*}
\nu(T,S)&=-\sum_{S': S\subset S'\subseteq T}  \nu(T,S')\\
&=-\sum_{S': S\subset S'\subseteq T}  (-1)^{|T\setminus S'|}\\
&=-\sum_{S': S\subseteq S'\subseteq T}  (-1)^{|T\setminus S'|}+(-1)^{|T\setminus S|}\\
&=(-1)^{|T\setminus S|}.
\end{align*}

The second equality follows from the induction hypothesis. The last one comes from the basic result in combinatorics, according to which $\sum_{S': S\subseteq S'\subseteq T}  (-1)^{|T\setminus S'|}$ is equal to $1$ if $S=T$ and to $0$ otherwise (see Lemma 1.1 in \cite{grabisch2016set}).

We have proved that $m$ coincides with the weighted Möbius performance measure $\ddot m$.\medskip

\end{onlyifpart}
\begin{ifpart}
    It is obvious that $\ddot m$ is continuous on $U$ and satisfies \textit{same contribution---same impact}.

    Let $\lambda \in [0,1]$, $\emptyset \neq S \subseteq A$, $p^{\lambda,S} \in {P}$, and $u \in {U}$. Then, 
    \begin{align*}
        \ddot m(u,p^{\lambda,S})=\sum_{\emptyset \neq B \subseteq A} u_B  \lambda\bigg( \sum_{T: B \subseteq T \subseteq S}\: (-1)^{|T\setminus B|}\bigg).
    \end{align*}

As, for all $B \subseteq A$, all $S \subseteq A$, 

\begin{align*}
        \sum_{T: B \subseteq T \subseteq S}\: (-1)^{|T\setminus B|}=\begin{cases}
            1, & \text{ if } B=S\\
            0, & \text{ otherwise}
        \end{cases},
    \end{align*}
 $m(u,p^{\lambda,S})=\lambda u_S$. We have proved that $\ddot m$ satisfies \textit{expected valuation for single-axiom-reducible problems}.
\end{ifpart}

\subsubsection*{\textbf{Independence}}

\begin{itemize}

\item Consider the mapping:\begin{align*}
    m: U \times P &\to \mathbb{R}\\
    (u,p) &\mapsto \max_{\emptyset \neq S \subseteq A} \: u_Sp_S.
\end{align*}

Such a performance measure satisfies \textit{expected valuation for single-axiom-reducible problems} but not \textit{same contribution---same impact}. Consider the following example, with $A=\{a_1,a_2,a_3\}$: 

\begin{equation*}
\begin{matrix}
&  & a_1 & a_2 & a_3 & a_1a_2 & a_1a_3 & a_2a_3 & A\\
& \mathbf{u}  & 1 & 1 & 1 & 3 & 3 & 2 & 6\\
& \mathbf{u^{a_1,a_2}}  & 1 & 1 & 1 & 3 & 3 & 5 & 6\\
& \mathbf{p} & 0.85 & 0.9 & 0.9 & 0.65 & 0.7 & 0.8 & 0.6\\
& \mathbf{p'} & 0.7 & 0.55 & 0.5 & 0.35 & 0.3 & 0.4 & 0.2
\end{matrix}.
\end{equation*}
\medskip

The reader can check that when considering $u$ and $u^{a_2a_3}$, the impact of $a_2a_3$ on $m$, as defined in the \textit{same contribution---same impact} principle is $4-3.6=0.4$ under $p$ while it is $2-0.2=0.8$ under $p'$ (and $\alpha^p_{a_2a_3}=\alpha^{p'}_{a_2a_3}=0.2$).

The measure $\hat{m}$ defined in Section \ref{charac} also satisfies \textit{expected valuation for single-axiom-reducible problems} but not \textit{same contribution---same impact}.\medskip

\item Consider the mapping:\begin{align*}
    m: U \times P &\to \mathbb{R}\\
    (u,p) &\mapsto \sum_{\emptyset \neq S \subseteq A} u_S\bigg(\sum_{T: S \subseteq T}\: (-1)^{|T\setminus S|} p_T \bigg)^2.
\end{align*}
Such a performance measure satisfies \textit{same contribution---same impact} but not \textit{expected valuation for single-axiom-reducible problems}: for all $\lambda \in [0,1]$, all $u \in U$, $m(u,p)=\lambda^2.$\medskip
\end{itemize}
Let us now display a performance measure which satisfies the two axioms, but is not continuous on $U$; $m:U \times P \to \mathbb{R}_+$: 
\begin{align*}
        m(u,p)=\begin{cases}
            \sum_{\emptyset \neq S \subseteq A} u_S \bigg( \sum_{T: S \subseteq T}\: (-1)^{|T\setminus S|} p_T \bigg) +b^p, \text{ for some } b^p \in \mathbb{R},&\text{if } u \in U_{st}\\
            \sum_{\emptyset \neq S \subseteq A} u_S \bigg( \sum_{T: S \subseteq T}\: (-1)^{|T\setminus S|} p_T \bigg) +4b^p & \text{ otherwise}
        \end{cases},
    \end{align*}
where, 
\begin{align*}
        b^p\begin{cases}
            >0 &\text{if } p \neq p^{\lambda, S} \text{ for any } \emptyset \neq S \subseteq A, \text{ and any } \lambda \in [0,1],\\
            =0 & \text{otherwise}
        \end{cases}.
\end{align*}

\subsection{Proof of Theorem \ref{theorem2}}

The \textit{sufficiency part} is readily checked. 

Suppose $\psi: P^*\to \mathbb{R}^J$ satisfies the three properties. Let
\begin{align*}
\Tilde{\psi}: V &\to \mathbb{R}^J\\
v &\mapsto \psi(1-v).\footnotemark
\end{align*}

\footnotetext{Briefly, \textit{same cost---same incompatibility} implies that for all $v,v'\in V$ and all $a \in A$ such that $v_{S \cup a}-v_S=v'_{S \cup a}-v'_S$ for all $S \subseteq A \setminus a$, $\Tilde{\psi}_a(v)=\Tilde{\psi}_a(v')$. Also, \textit{allocation of incompatibility} implies that for all $v \in V$, $\sum_{a \in A} \: \Tilde{\psi}_a(v)=v_A$. Finally, defining, for a permutation $\pi$, the game $v^{\pi}$ by $v^{\pi}_S=v_{(\pi(a))_{a\in S}}$, \textit{anonymity} implies that 
$\Tilde{\psi}_a(v)=\Tilde{\psi}_{\pi(a)}\big(v^{\pi}\big).$}

Consider $\varphi$ the Shapley incompatibility measure, and let 
$\Tilde{\varphi}$ denote the restriction of the classical Shapley value to the set of games $V$: for all $a \in A$, \begin{align*}
    \Tilde{\varphi}_a: V &\to \mathbb{R}\\
    v &\mapsto \varphi(1-v)=\sum_{S \subseteq A \setminus a} \frac{|S| !(J-|S|-1) !}{J !}(v_{S\cup a}-v_S).
\end{align*}  

Consider the family of games  $(\hat{v}^S)_{S\subseteq A}$ in $V$, where each game $\hat{v}^S$ is defined by  \begin{align*}
    \hat{v}^S_T=\begin{cases}
        1 &\text{ if } T \not \subseteq S\\
        0 & \text{ otherwise.}
    \end{cases} 
\end{align*}

Note that $\hat{v}^A=\mathbf{0}$, $\hat{v}^S=1-\hat{p}^{1,S}$ for all $\emptyset \neq S \subset A$, and $\hat{v}^{\emptyset}=\hat{p}^{1,A},$ where $\hat{p}^{1,S}$ denotes a collection of $\mathbb{R}^{2^J}$, such that $\hat{p}^{1,S}_{\emptyset}=1$, $\hat{p}^{1,S}_{T}=1$ if $\emptyset \neq T\subseteq S$, $0$ otherwise.

The following table illustrates this definition, with $A=\{a_1,a_2,a_3\}$, taking $v=\hat{v}^{a_1a_2}$ and the corresponding $p$:

\begin{equation*}
\begin{matrix}
& &\emptyset & a_1 & a_2 & a_3 & a_1a_2 & a_1a_3 & a_2a_3 & A\\
& \mathbf{v} & 0 & 0 & 0 & 1 & 0 & 1 & 1 & 1\\
& \mathbf{p} & 1 & 1 & 1 & 0 & 1 & 0 & 0 & 0
\end{matrix}.
\end{equation*}\medskip

We see from the proof of Lemma \ref{ConvHull1} that the set of extreme points of $V$ is the family $(\hat{v}^S)_{S\subseteq A}$ ---the set of extreme points of $P^*$ is the family $(\hat{p}^{1,S})_{S\subseteq A}$. Let $v \in V$. There exist a unique family of subsets of $A$, denoted by $\mathcal{I}^v$, and a unique family of positive real numbers $(\alpha^v_T)_{T\in \mathcal{I}^v}$, such that $\sum_{T \in I^v} \alpha^v_T =1$ and $$ v=\sum_{T \in \mathcal{I}^v} \alpha^v_T \hat{v}^T.$$

In particular, on $V\setminus \hat{v}^A$, for such families, the Shapley value associated with $a \in A$ is given by $$\Tilde{\varphi}_a(v)=\sum_{T \in \mathcal{I
}^v} \alpha^v_T \Tilde{\varphi}_a(\hat{v}^T)=\sum_{T \in \mathcal{I}^v: a \notin T} \alpha^v_T\frac{1}{|A\setminus T|},$$ and $\Tilde{\varphi}_a(\hat{v}^A)=0$. The reader may refer to Lemma \ref{Ext} below, where these equalities are proved. Moreover, it is not needed for this proof to explicitly give $(\mathcal{I}^v,(\alpha^v_T)_{T\in \mathcal{I}^v})$; this is simple though and we do so in the proof of Proposition \ref{Selectope} below. \medskip

We are now able to prove that the functions $\Tilde{\psi}$ and $\Tilde{\varphi}$ coincide on $V$. 

In the following, for a game $v \in V$, we say that two axioms $a,a' \in A$ are \textbf{symmetric} in $v$ if their transposition defines a symmetry of $v$. Formally, consider the permutation $\pi: A \to A$ defined by $\pi(\Tilde{a})=\Tilde{a}$ for all $\Tilde{a} \in A \setminus \{a,a'\}$, $\pi(a)=a'$ and $\pi(a')=a$. Axioms $a$ and $a'$ are symmetric in $v$ if, for all $S \subseteq A$, $v_{(\pi(a)_{a\in S})}=v_S$. \medskip

For $v \in V$, let $K^v$ denote the number of non-zero terms in a the convex combination of extreme points of $V$ to which $v$ is equal, described above.

If $K^v=0$, then $v=\mathbf{0}$ \textit{anonymity} and \textit{allocation of incompatibility} imply that, for all $a \in A$, $\Tilde{\psi}_a(v)=0=\Tilde{\varphi}_a(v)$. 

If $K^v=1$, then there is $T \subseteq A$ such that $v=\hat{v}^T$. If $T=A$, then $v=\mathbf{0}$ and one concludes as in the previous case. If $T=\emptyset$, by \textit{anonymity} and \textit{allocation of incompatibility}, for all $a \in A$, $\Tilde{\psi}_a=\frac{1}{J}=\Tilde{\varphi}_a$. Assume now $\emptyset \neq T\subset A$. For all $a \in T$, $v_{S\cup a}-v_{S}=0$ for all $S \subseteq A \setminus a$. Indeed, either $S \nsubseteq T$ and $v_{S\cup a}=v_{S}=1$, or $S \subset T$ and $v_{S\cup a}=v_{S}=0$. By \textit{same cost---same incompatibility}, this yields $\psi_a(v)=\psi_a(\mathbf{0})=0$. All $a,a'\notin T$ are symmetric in $v$ and we conclude, by \textit{anonymity} and \textit{allocation of incompatibility}, that $\Tilde{\psi}_a(v)=\Tilde{\psi}_{a'}(v)=\frac{1}{|A\setminus T|}=\Tilde{\varphi}_{a}(v)$.\medskip

We now proceed by induction on the value of $K^v$.\medskip

Assume that for all $v \in V \setminus \mathbf{0}$ such that $K^v\leq k \in \mathbb{N}$, for all $a \in A$,  $$\Tilde{\psi}_a(v)=\sum_{T \in \mathcal{I}^v: a \notin T} \alpha^v_T\frac{1}{|A\setminus T|}.$$ 

Let $v=\sum_{T \in \mathcal{I}^v} \alpha^v_T \hat{v}^T \in V\setminus \mathbf{0}$ with $K^{v}=k+1$. Consider $\mathcal{T}^v=\bigcup_{T \in \mathcal{I}^v} T$ and $a \in \mathcal{T}^v$. Define the game $$\nu=\sum_{T \in \mathcal{I}^v: a \notin T} \alpha^v_T \hat{v}^T + \big(1-\sum_{T \in \mathcal{I}^v: a \notin T} \alpha^v_T\big)\mathbf{0}.$$

Clearly, $\nu \in V$ and $K^{\nu} \leq k$. In addition, $\nu_{S\cup a}-\nu_{S}=v_{S\cup a}-v_{S}$ for any $S \subseteq A\setminus a$. Indeed, for all $S \subseteq A \setminus a$, for all $T \in \mathcal{I}^v$, 
$$\hat{v}^T_{S\cup a}-\hat{v}^T_{S}=
\begin{cases}
    0 &\text{ if } S \nsubseteq T\\
    0 &\text{ if } S \subset T \text{ and } a \in T \\
    1 &\text{ if } S \subseteq T \text{ and } a \notin T,  
\end{cases}$$
which implies: 
$$v_{S\cup a}-v_{S}=\sum_{\mathclap{\substack{T \in \mathcal{I}^v:
S \subseteq T,
a \notin T}}} \; \alpha^v_T=\nu_{S\cup a}-\nu_{S}.$$

Therefore, if $\nu \neq \mathbf{0}$, \textit{i.e.} if $a \notin \bigcap_{T \in \mathcal{I}^v} T$, $$\Tilde{\psi}_a(v)=\Tilde{\psi}_a(\nu)=\sum_{T\in \mathcal{I}^v: a \notin T} \alpha^v_T\frac{1}{|A\setminus T|}=\Tilde{\varphi}_a(v),$$
where the first equality follows from \textit{same cost---same incompatibility}, and the second follows from the induction hypothesis. And if $\nu=\mathbf{0}$, \textit{i.e.} if $a \in \bigcap_{T \in \mathcal{I}^v} T$, $\Tilde{\psi}_a(v)=0=\Tilde{\varphi}_a(v)$.

Moreover, all axioms in $A\setminus \mathcal{T}^v$ are symmetric in $v$.\footnote{For all $a\in A\setminus \mathcal{T}^v$, for all $S \subseteq A \setminus a$, $v_{S \cup a}=1$, thus, any transposition of two elements of $A\setminus \mathcal{T}^v$ is an symmetry of $v$.} As $\Tilde{\psi}_a$ coincides with $\Tilde{\varphi}_a$ for $a \in \mathcal{T}^v$, \textit{anonymity} and \textit{allocation of incompatibility} imply that for $a \notin \mathcal{T}^v$, $\Tilde{\psi}_a(v)=\Tilde{\varphi}_a(v)$.

We have proved that $\Tilde{\psi}$ coincides with $\Tilde{\varphi}$, which implies that the incompatibility measure $\psi$ coincides with the Shapley incompatibility measure.

\subsubsection*{\textbf{Independence}}
That the three principles, expressed for $\Tilde{\psi}: V \to \mathbb{R}$, that is, expressed for the restriction to $V$ of a \textit{solution}, are independent is shown in exactly the same way as when the set of admissible games is $G=\{u=(u_S)_{S\subseteq A} \in \mathbb{R}^{2^J}, u_{\emptyset}=0\}$. 

\subsection{Alternative characterisation of the Shapley incompatibility measure}
Consider the following principles.\medskip

\textbf{No cost---no incompatibility}\medskip 

Let $p \in P^*$, and $a \in A$ such that $p_{S\cup a}=p_a$ for all $S \subseteq A\setminus a$. 

Then, $$\psi_a(p)=0.$$\medskip

Such an axiom simply exerts no cost in terms of probability of satisfaction and should thus be considered as maximally compatible with the others.\medskip 

\textbf{Convex linearity}\medskip

Let $p,p' \in P^*$, $\lambda \in [0,1]$. 

Then, $$\psi_a(\lambda p+(1-\lambda) p')=\lambda \psi_a(p)+(1-\lambda) \psi_a(p').$$\medskip

This is a weakening of the classical \textit{additivity and positive homogeneity} principle required on the whole subspace $\left\{p \in \mathbb{R}^{2^J}, p_{\emptyset}=1\right\}$.\footnote{For completeness, let us state it: let $p, p^{\prime} \in \left\{p \in \mathbb{R}^{2^J}, p_{\emptyset}=1\right\}, \lambda \geq 0$.
Then,

$$
\psi_a\left(p+\lambda p^{\prime}\right)=\psi_a(p)+\lambda \psi_a\left(p^{\prime}\right).
$$
} It is suited for $P^*$, which is a compact and convex subset of this subspace. It is best interpreted as a simplicity requirement. 

\begin{theorem}\label{theorem3}
    An incompatibility measure $\psi: P^* \to \mathbb{R}^J$ satisfies

\begin{itemize}
    \item \textit{Convex linearity},
    \item \textit{Allocation of incompatibility}, 
    \item \textit{Anonymity}, and
    \item \textit{No cost---no incompatibility}
\end{itemize} 
if and only if it coincides with the Shapley incompatibility measure.
\end{theorem}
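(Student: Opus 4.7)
The strategy is to mirror the proof of Theorem \ref{theorem2}, replacing \textit{same cost---same incompatibility} by the combination of \textit{convex linearity} and \textit{no cost---no incompatibility}. For sufficiency, linearity of $\varphi_a$ in $p$ yields \textit{convex linearity}; \textit{allocation of incompatibility} is the classical efficiency of the Shapley value (telescoping the costs $p_S - p_{S\cup a}$ along any permutation of $A$ gives $p_\emptyset - p_A = 1 - p_A$, a property preserved under averaging over permutations); \textit{anonymity} is immediate from the symmetric form of the Shapley weights; and \textit{no cost---no incompatibility} kills every summand of $\varphi_a(p)$ term by term.

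For necessity, suppose $\psi: P^* \to \mathbb{R}^J$ satisfies the four axioms. The idea is to reduce to the evaluation of $\psi$ on the extreme points of $P^*$. The proof of Lemma \ref{ConvHull1} identifies these extreme points as the family $\{\hat{p}^{1,S}\}_{S \subseteq A}$, where (as in the proof of Theorem \ref{theorem2}) $\hat{p}^{1,S}_\emptyset = 1$, $\hat{p}^{1,S}_T = 1$ for $\emptyset \neq T \subseteq S$, and $\hat{p}^{1,S}_T = 0$ otherwise. By induction, \textit{convex linearity} extends to any finite convex combination, so $\psi$ is fully pinned down by its values on $\{\hat{p}^{1,S}\}_{S \subseteq A}$.

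The computation of $\psi(\hat{p}^{1,S})$ proceeds by cases. For $S = A$, the all-ones collection satisfies the null-player hypothesis for every axiom, so \textit{no cost---no incompatibility} yields $\psi_a(\hat{p}^{1,A}) = 0$. For $\emptyset \neq S \subset A$ and $a \in S$, a short case analysis on $T \subseteq A \setminus a$ (namely $T = \emptyset$, $\emptyset \neq T \subseteq S$, and $T \nsubseteq S$) shows that $\hat{p}^{1,S}_{T \cup a} = \hat{p}^{1,S}_T$, so the null-player hypothesis again applies and $\psi_a(\hat{p}^{1,S}) = 0$. For $a \notin S$, \textit{anonymity} equates these components and \textit{allocation of incompatibility} forces them to $1/(J - |S|)$. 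For $S = \emptyset$, \textit{anonymity} and \textit{allocation of incompatibility} similarly yield $\psi_a(\hat{p}^{1,\emptyset}) = 1/J$. A direct calculation, or the induction argument used in the proof of Theorem \ref{theorem2}, confirms that $\varphi$ takes the same values on every $\hat{p}^{1,S}$; in particular, for $a \notin S$, $\varphi_a(\hat{p}^{1,S}) = \sum_{T \subseteq S} \tfrac{|T|!(J-|T|-1)!}{J!}$, which reduces to $1/(J-|S|)$. \textit{Convex linearity} then propagates the equality $\psi = \varphi$ from the extreme points to all of $P^*$.

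The main (modest) obstacle is the case analysis establishing the null-player hypothesis on $\hat{p}^{1,S}$ for $a \in S$, together with the combinatorial identity $\sum_{T \subseteq S} \tfrac{|T|!(J-|T|-1)!}{J!} = 1/(J-|S|)$ needed to match the axiomatic value with the explicit Shapley formula on the partial-unanimity collections; both are routine but required to cleanly close the induction on extreme points.
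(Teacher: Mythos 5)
Your proposal is correct and follows essentially the same route as the paper: pin down $\psi$ on the extreme points $(\hat{p}^{1,S})_{S\subseteq A}$ using \emph{no cost---no incompatibility} for $a\in S$ and \emph{anonymity} plus \emph{allocation of incompatibility} for $a\notin S$, then extend to all of $P^*$ via \emph{convex linearity} and the convex decomposition into extreme points (the paper's Lemma \ref{Ext} argument, merely phrased on $V=1-P^*$). The only cosmetic differences are that you verify $\varphi$ on the extreme points by the direct combinatorial sum rather than by noting $\varphi$ itself satisfies the axioms, and you handle $\hat{p}^{1,A}$ via the null-cost axiom instead of anonymity and allocation; both are equally valid.
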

\begin{proof}
The \textit{sufficiency} is readily checked.

Suppose $\psi: P^* \to \mathbb{R}^J$ satisfies these four properties.
\begin{lemma}\label{Ext} Let $S\subseteq A$. Then, for all $a \in A$, $\Tilde{\psi}_a(\hat{v}^S)=\Tilde{\varphi}_a(\hat{v}^S)$.
\end{lemma}
\begin{proof}
If $S=A$, then $\hat{v}^S=\mathbf{0}$ and \textit{anonymity} and \textit{allocation of incompatibility} imply $\Tilde{\psi}_a(\hat{v}^S)=0=\Tilde{\varphi}_a(\hat{v}^S)$, for all $a \in A$. Let $S \subset A$ and consider $\hat{v}^S$. Let $a\in S$, then, by \textit{no cost---no incompatibility}, $\Tilde{\psi}_a(\hat{v}^S)=0=\Tilde{\varphi}_a(\hat{v}^S)$. In addition, all axioms in $A \setminus S$ are symmetric so that, by \textit{anonymity} and \textit{allocation of incompatibility}, for all $a \in A \setminus S$, $\Tilde{\psi}_a(\hat{v}^S)=\Tilde{\varphi}_a(\hat{v}^S)=\frac{1}{|A\setminus S|}$.
\end{proof}\medskip

By \textit{convex linearity}, for all $v \in V$, there exist a unique family of subsets of $A$, denoted by $\mathcal{I}^v$, and a unique family of positive real numbers $(\alpha^v_T)_{T\in \mathcal{I}^v}$, such that $\sum_{T \in \mathcal{I}^v} \alpha^v_T =1$ and $$ \Tilde{\psi}_a(v)=\sum_{T \in \mathcal{I}^v} \alpha^v_T \Tilde{\psi}_a(\hat{v}^T) \text{ for all } a \in A.$$

Then, by Lemma \ref{Ext}, $$ \Tilde{\psi}_a(v)=\sum_{T \in \mathcal{I}^v} \alpha^v_T \Tilde{\varphi}_a(\hat{v}^T)=\Tilde{\varphi}_a(v) \text{ for all } a \in A.$$

\end{proof}
That the four principles, expressed for $\Tilde{\psi}: V \to \mathbb{R}$, that is, expressed for the restriction to $V$ of a \textit{solution}, are independent is shown in exactly the same way as when the set of admissible games is $G=\{u=(u_S)_{S\subseteq A} \in \mathbb{R}^{2^J}, u_{\emptyset}=0\}$.

\subsection{Proof of Proposition \ref{Selectope}}

As we noted in the proof of Theorem \ref{theorem2}, the set of extreme points of the convex polytope $P^*$ is the family $(\hat{p}^{1,S})_{S \subseteq A}$ defined by $p^{1,S}_T=1$ if $T \subseteq S$ and $0$ otherwise, for all $T \subseteq S$.  For all $p \in P^*$, there exists a unique pair $\big(\mathcal{I}^{*p}, (\alpha^{*p}_T)_{T \subseteq A}\big)$,  where $\mathcal{I}^{*p}$ is a non-empty family of subsets of $A$, and $(\alpha^{*p}_T)_{T \subseteq A}$ is a family of real numbers such that

\begin{itemize}
    \item $0< \alpha^{*p}_T  \leq 1$ for all $T \in \mathcal{I}^{*p}$ and $\sum_{T \in \mathcal{I}^{*p}} \alpha^{*p}_T = 1$,
    \item $\alpha^{*p}_T=0$ for all $T \in 2^A\setminus \mathcal{I}^{*p},$ and
    $$p=\sum_{T \in \mathcal{I}^{*p}} \alpha^{*p}_T \hat{p}^{1,T}.$$
\end{itemize}

Let $p \in P^*$. Similarly to the procedure described in Section \ref{performance}, $\big(\mathcal{I}^{*p}, (\alpha^{*p}_T)_{T \subseteq A}\big)$ obtains as follows: 

\begin{itemize}
    \item[] \textbf{Step 1.} If $p_A>0$, \text{ set } $\mathcal{I}_1^{*p}=\{A\}$ and $\alpha^{*p}_A=p_A$ , otherwise, set $\mathcal{I}_1^{*p}=\emptyset$ and $\alpha^{*p}_A=0;$
    \item[] \textbf{Step} $\mathbf{k}$ \textbf{(for} $\mathbf{2\leq k \leq J}$\textbf{).} Set $\mathcal{I}_k^{*p}=\mathcal{I}^{*p}_{k-1} \cup \{T \subseteq A \text{ with } |T|=J-k \text{ and } p_T-\sum_{S: T \subset S} \alpha^{*p}_S > 0\}$, and, for all $T \subseteq A$ with $|T|=J-k$, $\alpha^{*p}_T=p_T-\sum_{S: T \subset S} \alpha^{*p}_S$.
    \item[] \textbf{Define} $\mathbf{\mathcal{I}^{*p}=\mathcal{I}^{*p}_{J}.}$
\end{itemize}

Then, $(\alpha^{*p}_T)_{T \subseteq A}$ is the Möbius transform of $p$ for the partial order $\geq$ defined on $2^A$ such that:

$$\text{for all } S,T, \: S \geq T \iff S \subseteq T.$$

Hence, for all $p \in P^*$, for all $S \subseteq A$, $\alpha^{*p}_S=\sum_{T:S\subseteq T}(-1)^{|T\setminus S|}p_T.$

The set of extreme points of $V$ is the family $(\hat{v}^S)_{S \subseteq A}=(1-\hat{p}^{1,S})_{S\subseteq A}$. Let $v=1-p$, $p \in P^*$, it is easy to see that $v=\sum_{S \in \mathcal{I}^{*p}} \alpha^{*p}_S \hat{v}^S$. Indeed, for all $T \subseteq A$:
\begin{align*}
    & p_T=\sum_{S \in \mathcal{I}^{*p}} \alpha^{*p}_S\hat{p}^{1,S}_T\\
    \iff &v_T=1-\sum_{S \in \mathcal{I}^{*p}} \alpha^{*p}_S\hat{p}^{1,S}_T\\
    \iff &v_T=1-\sum_{S \in \mathcal{I}^{*p}: T \subseteq S} \alpha^{*p}_S\\
    \iff &v_T=\sum_{S \in \mathcal{I}^{*p}} \alpha^{*p}_S \hat{v}^S_T.
\end{align*}

The last equivalence comes from the fact that $\sum_{S \in \mathcal{I}^{*p}} \alpha^{*p}_S \hat{v}^S_T=\sum_{S \in \mathcal{I}^{*p}: T \nsubseteq S} \alpha^{*p}_S$, and that $\sum_{S \in \mathcal{I}^{*p}: T \nsubseteq S} \alpha^{*p}_S+\sum_{S \in \mathcal{I}^{*p}: T \subseteq S} \alpha^{*p}_S=1.$

Then, for all $p \in P^*$, all $a\in A$, $$\varphi_a(p)=\Tilde{\varphi}_a(1-p)=\sum_{S \in \mathcal{I}^{*p}}\alpha^{*p}_S \Tilde{\varphi}_a(\hat{v}^S)=\sum_{S \subseteq A: a \notin S}\alpha^{*p}_S \frac{1}{|A\setminus S|}.$$\newline

\bibliographystyle{elsarticle-harv} 
\singlespacing
\bibliography{cas-refs}

%% else use the following coding to input the bibitems directly in the
%% TeX file.

% \begin{thebibliography}{00}

% %% \bibitem[Author(year)]{label}
% %% Text of bibliographic item

% \bibitem[ ()]{}

% \end{thebibliography}
\end{document}